\author{ \href{https://orcid.org/0000-0001-9729-2959}{\includegraphics[scale=0.06]{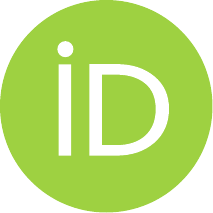}\hspace{1mm}Hugo Gilbert}\\
Université Paris-Dauphine, Université PSL, \\  CNRS, LAMSADE, 75016 Paris, France \\
	\texttt{hugo.gilbert@lamsade.dauphine.fr} \\
\And
	\href{https://orcid.org/0009-0004-7111-3202}{\includegraphics[scale=0.06]{orcid.pdf}\hspace{1mm}Guillaume Méroué} \\
Inria, Université Côte d’Azur, CNRS, I3S, \\ 
Templier 1, 06410 Biot, France \\
	\texttt{guillaume.meroue@inria.fr} \\
	\And
	\href{https://orcid.org/0000-0001-6789-9573}{\includegraphics[scale=0.06]{orcid.pdf}\hspace{1mm}Jérôme Lang} \\
Université Paris-Dauphine, Université PSL, \\  CNRS, LAMSADE, 75016 Paris, France \\
	\texttt{jerome.lang@lamsade.dauphine.fr} \\
\And
	\href{https://orcid.org/0000-0002-1570-5244}{\includegraphics[scale=0.06]{orcid.pdf}\hspace{1mm}Sylvain Bouveret} \\
Université Grenoble-Alpes, LIG, \\ 38402 Saint-Martin-d'Hères, France
 \\
	\texttt{sylvain.bouveret@imag.fr} \\
}
\pgfplotsset{compat=1.17}
\newcommand{\AgentSet}{\mathcal{A}}
\newcommand{\ItemSet}{\mathcal{G}}
\newcommand{\AllocationFunction}[2]{\pi_{#1}^{#2}}
\newcommand{\ProbModel}{\Psi}
\newcommand{\Rank}{\mathtt{rk}}
\newcommand{\FI}{\mathtt{FI}}
\newcommand{\FC}{\mathtt{FC}}
\newcommand{\IC}{\mathtt{IC}}
\newcommand{\PL}{\mathtt{PL}}
\newcommand{\Ml}{\mathtt{Mll}}
\newcommand{\eu}{\mathtt{eu}}
\newcommand{\FIC}[1]{\FI\text{-}\FC_{#1}}
\newtheorem{proposition}{Proposition}
\newtheorem{lemma}{Lemma}
\newtheorem{theorem}{Theorem}
\newtheorem{definition}{Definition}
\newtheorem{example}{Example}
\newcommand{\HG}[1]{#1}
\newcommand{\ejl}{}
\newcommand{\bjl}{} 
\newcommand{\cjl}[1]{}
\newcommand{\chg}[1]{}
\newcommand{\SB}[1]{}
\newcommand{\cgm}[1]{}
\title{Constrained Serial Dictatorships can be Fair}
\begin{document}

\maketitle

\begin{abstract}
When allocating indivisible items to agents, it is known that the only strategyproof mechanisms that satisfy a set of rather mild conditions are {\em constrained serial dictatorships}: given a fixed order over agents, at each step the designated agent chooses a given number of items (depending on her position in the sequence). 
Agents who come earlier in the sequence have a larger choice of items; however,
this advantage can be compensated by a higher number of items received by those who come later. How to balance priority in the sequence and number of items received is a nontrivial question. We use a previous model, parameterized by a mapping from ranks to scores, a social welfare functional, and a distribution over preference profiles. For several meaningful choices of parameters, we show that the optimal sequence can be computed exactly in polynomial time  or approximated using  
sampling.  
Our results hold for several probabilistic models on preference profiles, with an 
emphasis on the Plackett-Luce model. 
We conclude with experimental results showing how the optimal sequence is impacted by various parameters.
\end{abstract}

\section{Introduction}\label{sec:intro}

In an ideal world, a mechanism for dividing a set of indivisible goods (or items, we use both terms interchangeably) 
should be at the same time efficient, fair, and insensitive to strategic behaviour. Now, strategyproofness is a very strong requirement that severely limits the choice of mechanisms. The question we address in this paper is, {\em how can we design strategyproof mechanisms while retaining an acceptable level of fairness and/or efficiency}? 

It is known that {\em under mild conditions, the only strategyproof mechanisms are within the family of serial dictatorships} (although the landscape is less dramatic when there are only two agents, see our related work section).  
A standard serial dictatorship is defined by a permutation of the set of agents; at each step, the designated agent chooses all the items she likes from those that are still available.
A {\em constrained serial dictatorship (CSD)}, also called quota serial dictatorship, is similar  except that at each step, the designated agent chooses a predefined number of items. 
 
(Constrained or unconstrained) serial dictatorships are strategyproof; but are they acceptable on efficiency and fairness grounds? Unconstrained serial dictatorships are clearly not: if the first agent likes all items then she will pick them all. Constrained serial dictatorships do better, at the price of the loss of Pareto-efficiency; but still, agents appearing early in the sequence have a much larger choice than those appearing late. This is patent in the case where there are as many items as agents, each agent being entitled to only one item, CSDs cannot do better than this: the first agent will get her preferred item, and the last agent will have no choice and might receive her least preferred item. 

However, when there are more items than agents, and agents can receive several items, things become better, because the advantage towards agents who come early in the sequence can be compensated by a higher number of items received by those who come later. Suppose, as a simple example, that three items have to be assigned to two agents, A(nn) and B(ob). Assuming that Ann picks first, there are three CSDs: AAA (Ann picks all items), AAB (Ann picks two, Bob one), and ABB (Ann picks one, Bob two). 
It is intuitively clear that ABB is optimal, but how can optimality be defined? With four items, things are less clear: AAAA and AAAB are clearly less desirable than AABB and ABBB, but which of these two should we choose? And what if we have five agents and seventeen items? 

To sum up: strategyproofness leaves us almost no choice but (constrained) serial dictatorship; some are intuitively better than others. What remains to be done is to {\em define formal optimality criteria for choosing between CSDs}, and to {\em compute optimal ones}. Our paper addresses these questions.

A way of answering the first question has been suggested by \citet{BouveretL11}, and further studied by \citet{KalinowskiNW13} (in the more general context of {\em picking sequences}).\footnote{Picking sequences are more general as agents don't necessarily pick their items in a row; for instance, the sequence where Ann picks one item, Bob two, and Ann picks the last remaining item, is not a CSD. Round Robin (perfect alternation) is another example. CSDs coincide with {\em non-interleaving} picking sequences.} 
A way of measuring the efficiency and fairness of a CSD consists in estimating the {\em expected social welfare}, according to some social welfare functional \cite{DASPREMONT2002459} -- egalitarian, Nash, utilitarian\footnote{If our main objective is {\em fairness}, 
utilitarian social welfare may not fit well. We will see further that it is the case indeed.} --  of the allocation resulting from the application of the serial dictatorship. 
Because the agents' values for items are not known, \citet{BouveretL11} estimate them from the ranks of items in an agent's preference relation: for each agent $i$, the value of item ranked in position $j$ is a fixed value $s_j$, independent from $i$. 
To estimate the expected social welfare, in addition to the non-increasing scoring vector $(s_1, \ldots, s_m)$ one also needs to assume a probability distribution over ordinal preference profiles.  These preferences can be drawn following different models as impartial culture, or more generally the Mallows~\cite{mallows1957non}, or  Plackett-Luce models~\cite{luce1959individual,plackett1975analysis}. 

These three components (scoring vector, probability over profiles, social welfare functional) allow to associate an expected social welfare with any CSD.   
We define optimal CSDs this way, for various scoring vectors, three social welfare functionals, and various probability distributions. 

For egalitarian social welfare, we provide a simple algorithm which returns an optimal CSD given that one can compute the expected utility obtained by an agent when a CSD is used. This algorithm makes it possible to compute an optimal (respectively, close to optimal) CSD when this expected utility is polynomial-time computable (respectively, can be approximately evaluated, e.g., by sampling). 
We also provide a dynamic programming algorithm that computes an optimal CSD for utilitarian, Nash or egalitarian social welfare under a specific condition, which is met when preferences are fully correlated, or when they are fully independent and follow the impartial culture or more generally the Plackett-Luce model. 

Sections~\ref{sec:related} and \ref{sec:model} discuss related work and present our model. 
Section~\ref{sec : computing optimal} presents our algorithms for computing an optimal CSD. 
These algorithms assume the existence of an oracle which can compute or estimate the expected utility of a picker given a CSD. 
Section~\ref{sec : utility agent} designs such oracles under various model assumptions.
Section~\ref{sec:ic-results} gives results for small values of $n$, and depicts and comments on the evolution of the optimal sequences when all criteria except one are fixed.

\section{Related work}\label{sec:related}

\paragraph{Strategyproof allocation of indivisible goods} 

Various characterization theorems state that, under mild additional conditions, strategyproof allocation mechanisms all have a serial dictatorship flavour:  
with strict 
preferences over subsets,
\bjl 
only serial dictatorships are strategyproof, neutral, and nonbossy
\cite{Svensson99}, whereas only sequential dictatorships (a generalization of serial dictatorship where the identity of the agent picking in position $k$ depends on the items assigned to the agents in positions 1 to $k-1$) are strategyproof, Pareto-efficient, and nonbossy \cite{Papai01}.
\ejl
If preferences are quantity-monotonic (a bundle of larger cardinality is always preferred to one of lower cardinality) then a mechanism is strategyproof, nonbossy, Pareto-efficient and neutral if and only if it is a CSD (also called a quota serial dictatorship) \cite{Papai00}. Similar characterizations hold replacing quantity-monotonic by lexicographic preferences \bjl \cite{HosseiniL19,HosseiniSVX21}.\ejl With standard monotonicity, only quasi-dictatorships remain, where only the first agent in the sequence is allowed to pick more than one item \cite{Papai00}. 
Variants of these characterizations have been established by \citet{EhlersKlaus03}, \citet{BogomolnaiaDE05} and \citet{Hatfield09}. 
Ignoring Pareto-efficiency or neutrality opens the door to more complex strategyproof mechanisms; a full characterization in the  two-agent case is given by \citet{AmanatidisBCM17}. 
\bjl \citet{AmanatidisBM16} show that the CSD where all agents except the last one pick only one item  is a $\nicefrac{1}{\lfloor\frac{n-m+2}{2}\rfloor}$-approximation to maxmin fair share. 
\ejl 
Weakening strategyproofness into non obvious manipulability opens the door for more possibilities \cite{PsomasV22}.

\citet{NguyenBR18} show that when agents have preferences over sets of items defined from preferences over single items by an extension principle, some scoring rules are strategyproof for some extension principles. Allowing randomized mechanisms offers more possibilities, but not much \bjl \cite{HosseiniL19,Kojima09,GargP22,BuTao24}.
\ejl 
CSDs are also considered in chore allocation \cite{AzizLW19}.

\paragraph{Picking sequences}
Sequential allocation of indivisible goods, also known as picking sequences, originates from 
\citet{KC71}, with a game-theoretic study of the alternating sequence for two agents. 
Still for two agents, \citet{Brams00} 
consider other particular sequences. 
\citet{BouveretL11} define a more general class of sequences, for any number of agents, and argue that sequences can be compared with respect to their expected social welfare, using a scoring vector and a prior distribution over profiles. \citet{KalinowskiNW13} show that computing the expected utility of a sequence is polynomial under full independence, and that strict alternation is optimal for two agents,  utilitarian social welfare and Borda scoring.  
The manipulation of picking sequences is studied by \citet{BL-ECAI14}, \citet{TominagaTY16} and \citet{ABLM-AAAI17}. \citet{FlamminiG20} and \citet{xiao2020algorithms} study the parameterized complexity of computing an optimal manipulation. Game-theoretic aspects of 
picking sequences are addressed by \citet{KalinowskiNWX13}.  
\citet{ChakrabortySS21} study picking sequences for agents with different entitlements. While all these works are oblivious to agent identities, \citet{DBLP:conf/sagt/CaragiannisR23} try to find an approximately optimum order of agents in a serial dictatorship with a limited number of queries. 

\paragraph{Maximizing social welfare in allocation of indivisible goods}

A classical way of guaranteeing a level of fairness and/or efficiency consists in finding an allocation {\em maximizing social welfare}, under the assumption that the input contains, for each agent, her utility function over all bundles of goods
(usually assumed additive).
Egalitarian social welfare places fairness above all, utilitarian social welfare cares only about efficiency only, and Nash social welfare is considered as a sweet spot inbetween. 
See \cite{BouveretCM16,LangR24,AmanatidisABFLMVW23,AzizLMW22}
for surveys. 
These mechanisms
are not strategyproof.

\section{Preliminaries : The model}\label{sec:model}

Given \(n \in \mathbb{N}^*\), we use \([n]\) to denote \(\{1,\ldots,n\}\) and \([n]_0\) to denote \(\{0,1, \ldots, n\}\). 
Bold symbols represent vectors. 

Let \(\AgentSet = \{a_1,\ldots,a_n\}\)  be a set of \(n\) agents with $a_i$ the $i^{th}$ agent to intervene in the allocation process and \(\ItemSet = \{g_1,\ldots,g_m\}\) a set of \(m\) goods. A  preference profile \(\boldsymbol{P}\! =\! (\succ_{a_1}, \ldots,\succ_{a_n})\) describes the preferences of the agents:  \(\succ_a\) is a ranking that specifies the preferences of agent \(a\) over the goods in \(\ItemSet\). We denote by \(\Rank_{\boldsymbol{P}}^a(g)\), the rank of item \(g\) in the ranking of \(a\), given profile \(\boldsymbol{P}\). 
{\em The preference profile is hidden, and therefore not part of the input:} we will assume that rankings are drawn independently according to some probabilistic model, that we denote by \(\ProbModel\). 

Two well-known probabilistic models are the \emph{Mallows} and \emph{Plackett-Luce models}~\cite{mallows1957non,luce1959individual,plackett1975analysis}: 
\begin{itemize}
    \item The \emph{Mallows model} is parameterized by a dispersion parameter $\phi\in [0,1]$ and a ranking $\mu$. We denote this model by $\Ml_{\mu,\phi}$. In this model, the probability of a ranking $r$ is proportional to $\phi^{d_{\mathtt{KT}}(r,\mu)}$, with $d_{\mathtt{KT}}(r,\mu)$, the Kendall-Tau distance between rankings $r$ and $\mu$.
    \item The \emph{Plackett-Luce (PL) model} is parameterized by a value vector $\boldsymbol{\nu} = (\nu_1, \ldots, \nu_m)$. 
    Intuitively, $\nu_i > 0$ represents the social value of good $g_i$. 
    In this model, which we denote  by $\PL_{\boldsymbol{\nu}}$, the probability of a ranking $r = g_{i_1} \succ g_{i_2} \succ ... \succ g_{i_m}$ is:
    $$\prod_{j=1}^m \frac{\nu_{i_j}}{\sum_{l = j}^m \nu_{i_l}}. $$
    The Plackett-Luce model has proven particularly good for learning a preference relation over a set of items (a.k.a. label ranking) \cite{cheng2010label} so it fits particularly well here. 
\end{itemize}
These models generalize the two following sub-cases:  
\begin{itemize}
    \item \emph{Impartial Culture}, denoted by $\IC$, in which each preference ranking is drawn u.a.r. from the set of all possible rankings. Impartial culture is obtained when $\phi = 1$ for the Mallows model and when all values in $\boldsymbol{\nu}$ are equal  for the Plackett-Luce model.
    \item The {\em Full Correlation} case, denoted by \(\FC\) stipulates that all agents have exactly the same preference ranking. Full correlation is obtained when $\phi = 0$ for the Mallows model (and also as the limit of Plackett-Luce models $\boldsymbol{\nu}^M = (M^{m-1}, \ldots, M, 1)$ when $M \rightarrow \infty$).
\end{itemize}


In the sequel, we obtain different results for $\ProbModel \in \{\FC, \IC, \Ml_{\mu,\phi}, \PL_{\boldsymbol{\nu}}\}$.

The items are allocated to the different agents according to a CSD: given a vector \(\boldsymbol{k} = (k_1, \ldots, k_n)\) of \(n\) non-negative integers, agent \(a_1\) will first pick \(k_1\) goods, then \(a_2\) will pick \(k_2\) goods within the remaining ones, and so on until \(a_n\) picks \(k_n\) items. 
In most cases, we will consider complete CSDs, in the sense that \(\sum_{i=1}^n k_i = m\). 
However, we may also consider incomplete CSDs such that  \(\sum_{i=1}^n k_i < m\).  
We assume that agents behave greedily by choosing their preferred goods within the remaining ones. 
This sequential process leads to an allocation that we denote by \(\AllocationFunction{\boldsymbol{P}}{\boldsymbol{k}}\). More formally, \(\AllocationFunction{\boldsymbol{P}}{\boldsymbol{k}}\) is a function such that \(\AllocationFunction{\boldsymbol{P}}{\boldsymbol{k}}(a)\) is the set of goods that agent \(a\) has obtained at the end of the sequential allocation process, given preference profile \(\boldsymbol{P}\) and vector \(\boldsymbol{k}\).

The utility of an agent for obtaining an item \(i\) will be derived using a scoring vector. Stated otherwise, there is a vector \(\boldsymbol{s} = (s_1,\ldots, s_m)\in \mathbb{Q^+}^m\) such that  \(s_i \ge s_{i+1}\) for all \(i\in [m-1]\). 
The value received by an agent for obtaining her \(j^{th}\) preferred item is \(s_j\).  Different scoring vectors can be considered. An important example
is
    the {\em Borda} scoring vector, where $s_i = m-i+1$. 
Using scores as proxy for utilities is classical in social choice: this is exactly how positional scoring voting rules (e.g., the Borda rule) are defined, and they have are also used in fair division settings \cite{BramsEF03,DarmannK16,BaumeisterBLNNR17}.

We denote by \(U_{\boldsymbol{P}}^{\boldsymbol{k}}(a) = \sum_{g \in \AllocationFunction{\boldsymbol{P}}{\boldsymbol{k}}(a)} s_{\Rank_{\boldsymbol{P}}^a(g)}\) the utility 
obtained by \(a\) when receiving \(\AllocationFunction{\boldsymbol{P}}{\boldsymbol{k}}(a)\) and by \(EU_{\ProbModel}^{\boldsymbol{k}}(a) = \mathbb{E}_{\boldsymbol{P}\sim \ProbModel}[U_{\boldsymbol{P}}^{\boldsymbol{k}}(a)]\) her expected utility given model \(\ProbModel\). 
This assumes that agents have {\em additive} preferences, which is very common in fair division. 
The utilitarian social welfare (USW) \(W_\ProbModel^U(\boldsymbol{k})\),  egalitarian social welfare (ESW) \(SW_\ProbModel^E(\boldsymbol{k})\), and Nash social welfare (NSW) \bjl \(SW_\ProbModel^N(\boldsymbol{k})\) \ejl are then  defined by:\bjl 
\begin{align*}
    &SW_\ProbModel^U(\boldsymbol{k}) = \sum_{a\in \AgentSet} EU_{\ProbModel}^{\boldsymbol{k}}(a), \hspace{0.6cm}
    SW_\ProbModel^E(k) = \min_{a\in \AgentSet} EU_{\ProbModel}^{\boldsymbol{k}}(a),  \hspace{0.6cm}\\ 
    &~~~~~~~~~~~~~~~~~~~~~~~~~SW_\ProbModel^N(\boldsymbol{k}) = \prod_{a\in \AgentSet} EU_{\ProbModel}^{\boldsymbol{k}}(a).
\end{align*}
\ejl
Note that our social welfare notions are meant \emph{ex ante}, i.e., we define them on the expected utility values of the agents. This is different from the notion of \emph{ex post} social welfare which considers the utility of the agents once the profile $\boldsymbol{P}$ issued from $\ProbModel$ is determined.\\

Our objective is to study the  following class of optimization problems OptSD-\(\ProbModel\)-\(x\) with \(x\in \{U,E,N\}\).
\begin{cproblem}{OptSD-\(\ProbModel\)-\(x\)}
\textbf{Input}: A number \(n\) of agents, a number \(m\) of goods, and a scoring vector \(\boldsymbol{s}\).\\
\textbf{Find}: A vector \(\boldsymbol{k} = (k_1, \ldots, k_n)\) of \(n\) non-negative integers with $\sum_{i=1}^n k_i= m$ maximizing \(SW_\ProbModel^x(\boldsymbol{k})\).
\end{cproblem}

The following easy observation will be useful: 
\begin{restatable}{observation}{obsOne}
  For given \(n\) and \(m\), the number of vectors \(\boldsymbol{k} =  (k_1, \ldots, k_n)\) 
  such that \(\sum_{i = 1}^n k_i = m\)
  equals \(\binom{n+m-1}{n-1}\).\label{boundOnNbSol}
\end{restatable}

From this observation, we can deduce that the number of potential vectors is lower-bounded by \(\frac{m^n}{(n-1)!}\). This number does not take into account a natural further assumption that the optimal sequence is {\em non-decreasing}, that is, that $k_1 \leq k_2 \leq \ldots \leq k_n$. We will see further that this assumption holds for ESW (under a mild condition), {\em but not} for USW. When the assumption holds, we can restrict the search to non-decreasing vectors; their number is the number of integer partitions $m$ into $n$ numbers; it is still exponentially large, but no closed form expression is known.



\section{Computing an optimal CSD} \label{sec : computing optimal}

We now investigate the problem OptSD-\(\ProbModel\)-\(x\) with \(x\in \{U, E, N\}\).
\bjl All algorithms in this Section assume access \ejl to an oracle algorithm $\mathcal{T}_{\ProbModel}(\boldsymbol{k},i)$ computing $EU^{\boldsymbol{k}}_{\ProbModel}(a_i)$ in time $K(n,m, \boldsymbol{s})$. \bjl The computation of expected utilities of agents 
for various models will be addressed in Section \ref{sec : utility agent}. \ejl

We start by a positive result for Egalitarian Social Welfare: the optimal CSD can be computed by the greedy-like Algorithm~\ref{alg:esw}. 
$\mathtt{Completion}(\boldsymbol{k})$ denotes, for any partial CSD $\boldsymbol{k}$, the complete CSD such that $\mathtt{Completion}(\mathbf{k})_i \!=\! k_i$ for $i \!\in\! [n-1]$ and $\mathtt{Completion}(\mathbf{k})_n \!=\! m \!-\! \sum_{i\in [n-1]} k_i$. 
In informal terms, $\mathbf{k}$ is completed by giving all remaining goods to the last agent. 

\begin{algorithm}
    \caption{GreedyESW}\label{alg:esw}
    \begin{algorithmic}[1]
        \Require the number of agents $n$, the number of goods $m$, the scoring vector $\boldsymbol{s}$, the oracle algorithm $\mathcal{T}_{\ProbModel}$ 
        \State $\boldsymbol{k} \gets (0, \ldots, 0)$ 
        \# empty CSD
        \State $\mathbf{max\_k}, \max\_{esw} \gets \boldsymbol{k}, 0$
        \For{$t = 1$ to $m$}
            \State $i \in argmin_{i\in [n]} EU^{\boldsymbol{k}}_{\ProbModel}(a_i)$
            \State $k_i \gets k_i + 1$
            \If{$SW^E_{\ProbModel}(\boldsymbol{k}) > \max\_{esw}$}
                \State $\mathbf{max\_k}, \max\_{esw} \gets \boldsymbol{k}, SW^E_{\ProbModel}(\boldsymbol{k})$
            \EndIf
        \EndFor
        \State \Return $\mathtt{Completion}(\mathbf{max\_k})$;
    \end{algorithmic}
\end{algorithm}

At line 1, we start with an empty CSD,
that we will modify in a greedy fashion. 
In the \textbf{for} loop (lines 3-9), we identify an agent with minimal expected utility (line 4)
and increment the number of goods that she gets (line 5).
The CSD that is returned is not necessarily this CSD $\boldsymbol{k}$. 
During the algorithm, we keep in variables $\max\_{esw}$ and $\boldsymbol{\max\_k}$, the maximum ESW found so far and the corresponding (partial) CSD. 
The algorithm returns $\boldsymbol{\max\_k}$ completed by giving all remaining goods to the last agent (line 10).
The completion step is not really necessary (the partial sequence obtained at line 9 already has maximum expected egalitarian social welfare); its role is to ensure that no good is left unallocated. 
The reason why one needs the test at line  6 is that letting the currently least happy agent pick one more good may decrease the ESW, as can be seen on the following example.

\begin{example}
    Let $n = 2$, $m = 5$, $s = (50, 10, 4, 2, 1)$, and $\ProbModel = \IC$. We show below the partial CSDs obtained in each iteration $t$ 
    together with the expected utilities of both agents  
    (they can be computed easily, as we will see in Section  \ref{sec : utility agent})   
    and the values of $i_t$ and $\max\_{esw}$.
    \ejl 
    $$\begin{array}{c|cccccc}
        t & \boldsymbol{k} & \boldsymbol{\max\_k} & EU^{\boldsymbol{k}}_{\ProbModel}(a_1) 
        & EU^{\boldsymbol{k}}_{\ProbModel}(a_2)
         & i^t
         & \max\_{esw}
         \\ \hline
         1 &  (0,0) & (0,0) & 0 & 0 & 1 & 0\\
         2 &  (1,0) & (0,0) & 50 & 0 & 2 & 0\\
         3 &  (1,1) & (1,1) & 50 & 42 & 2 & 42\\
         4 &  (1,2) & (1,2) & 50 & 49.6 & 2 & 49.6\\
         5 &  (1,3) & (1,3) & 50 & 52.4 & 1 & 50\\
         6 &  (2,3) & (1,3) & 60 & 40.2 & 1 & 50
    \end{array}$$
    
    At iteration 5, the least happy agent is $a_1$; however, letting $a_1$ pick one more good, that is, $\boldsymbol{k} = (2,3)$ gives $EU^{\boldsymbol{k}}_{\ProbModel}(a_1) = 60$ and $EU^{\boldsymbol{k}}_{\ProbModel}(a_1) = 40.2$ (iteration 6), decreasing the currently optimal expected ESW. Therefore, $\boldsymbol{\max\_k}$ is not replaced by $\boldsymbol{k} = (2,3)$ at line 6
    of the algorithm. 
    The algorithm returns $\mathtt{Completion}(\mathbf{max\_k}) = (1,4)$ (with expected utilities  50 and 53.6) with the remaining good given to $a_2$.
\end{example}

\begin{proposition}\label{thm:Algo1}
    Algorithm~\ref{alg:esw} returns a CSD $\boldsymbol{k}$ maximizing $SW^E_{\ProbModel}(\boldsymbol{k})$, solving problem OptSD-\(\ProbModel\)-\(E\), in time $O(nmK(n,m,\boldsymbol{s}))$.
\end{proposition}

The proof is based on the following lemma:

\begin{lemma}\label{lemma : algo esw}
    Let $\hat{\boldsymbol{k}}$ \bjl be a 
    CSD.\ejl Let $\max\_{esw}^t$, $\boldsymbol{k}^t$ and $i^t$ denote $\max\_{esw}$, $\boldsymbol{k}$ and $i$ after line 4 of iteration $t$ of the \textbf{for} loop in Algorithm~\ref{alg:esw}. For all $t$, a necessary condition for $SW^E_{\ProbModel}(\hat{\boldsymbol{k}}) > \max\_{esw}^t$ is that $\hat{k}_j \ge k_j^t$ for all $j\in [n]$, and $\hat{k}_{i^t} > k^t_{i^t}$.
\end{lemma}

\begin{proof}
    By induction. At iteration 0, the claim is obvious. Assume that the claim holds for iteration $t$, and let $\hat{\boldsymbol{k}}$ be a CSD such that $SW^E_{\ProbModel}(\hat{\boldsymbol{k}}) > \max\_{esw}^{t+1}$. Then obviously $SW^E_{\ProbModel}(\hat{\boldsymbol{k}}) > \max\_{esw}^{t}$ as $\max\_{esw}^{t+1} \ge \max\_{esw}^{t}$.
    Because the condition holds for iteration $t$ and by construction of $\boldsymbol{k}^{t+1}$ we have that $\hat{k}_j \ge k^{t+1}_j$ for all $j\in [n]$.
    Now suppose that $\hat{k}_{i^{t+1}} = k_{i^{t+1}}^{t+1}$. In that case, $SW^E_{\ProbModel}(\hat{\boldsymbol{k}}) \leq EU^{\boldsymbol{k}^{t+1}}(a_{i^{t+1}}) \le \max\_{esw}^{t+1}$, a contradiction with the induction hypothesis. The first inequality is due to the fact that $a_{i^{t+1}}$ will get the same number of goods in $\hat{\boldsymbol{k}}$ and $\boldsymbol{k}^{t+1}$ while the agents picking before her will get at least as many goods in $\hat{\boldsymbol{k}}$ than in  $\boldsymbol{k}^{t+1}$. The second inequality is due to the definition of $i^{t+1}$.
\end{proof}


\begin{proof}[Proof of Proposition~\ref{thm:Algo1}] 
     Suppose that there exists a CSD $\hat{\boldsymbol{k}}$ such that $SW^E_{\ProbModel}(\hat{\boldsymbol{k}}) > \max\_{esw}$.
Lemma~\ref{lemma : algo esw} applied at iteration $t = m$ implies that each agent receives more objects with $\hat{\boldsymbol{k}}$ than with the greedily constructed complete CSD $\boldsymbol{k}$ obtained at the end of the \textbf{for} loop. As they both have $m$ objects to allocate, they must be equal. This is a contradiction of the hypothesis as $\max\_{esw} \ge SW^E_{\ProbModel}(\boldsymbol{k})$.
\end{proof}


We now go beyond egalitarian social welfare. For utilitarian and Nash social welfare, we do not know of an efficient algorithm which would work for any distribution.
A general approach could be to sample a large but hopefully reasonable number of preference profiles from $\ProbModel$ and find a CSD with maximal social welfare considering the average utility of each agent. Yet, we prove in  Appendix~\ref{app : sec 5} that such an approach leads to an NP-hard problem for USW.

However, provided the distribution satisfies a natural condition, a CSD maximizing utilitarian and Nash social welfare can be computed by dynamic programming. This condition on $\ProbModel$ states that $EU_{\ProbModel}^{\boldsymbol{k}}(a)$ only depends on the number of items picked by $a$, and the number of items that have been picked before $a$, but not on the number of agents who have picked before and how many items they have picked each.

\bjl
\begin{definition}
A distribution $\ProbModel$ satisfies {\bf prefix independence} if for any  sequence $\boldsymbol{k}$ and $i \in [n]$,  if 
$a$ is the $i^{th}$ picker in  $\boldsymbol{k}$, then
$EU_{\ProbModel}^{\boldsymbol{k}}(a)$ only depends on (1) $\kappa = k_i$, the number of goods that she picks, and (2) $\tau = \sum_{j=1}^{i-1}k_j$, the number of goods that have been picked before she starts picking.
\end{definition} 

\ejl
\HG{
Under} \bjl     prefix independence\ejl, 
\HG{the utility that agent $a$ gets when picking $\kappa$ goods while $\tau$ have already been picked, $\eu(\kappa,\tau)$, is well-defined, and is exactly equal to $EU_{\ProbModel}^{\boldsymbol{k}}(a)$ when $a$ is the $i^{th}$ picker $\kappa \! = \! k_i$ and $\tau \!=\! \sum_{j=1}^{i-1}k_j$.
}

\cjl{I removed ``We now show that, when \bjl prefix independence \ejl
holds, a dynamic programming algorithm makes it possible to solve OptSD-\(\ProbModel\)-\(x\) for $x\in\{\HG{E,U,N}\}$.''
}
For pedagogical purposes, let us first focus on 
maximising USW. 
When \HG{prefix independence} is met, one can use the following 
dynamic programming equations:
\begin{align}
    F(i,\tau) =& \max_{\kappa \in [m - \tau]_0} (\eu(\kappa,\tau) \!+\! F(i+1,\tau+\kappa)), \notag\\
    &\hspace{2cm}\forall i,\tau \in [n-1]\times [m]_0  \label{eq : DP1 utilitarian},\\
    F(n,\tau) =& \eu(m-\tau,\tau), \forall \tau \in [m]_0, \notag
\end{align}
where \(F(i,\tau)\) corresponds to the maximum USW that can be obtained by agents $\{a_i,a_{i+1},\ldots,a_n\}$ in the situation in which $\tau$ goods have already been allocated and we allocate the $m-\tau$ remaining goods to them.
Of course the optimal value is given by \(F(1,0)\). 

The other problems can be solved similarly. For \HG{problem OptSD-\(\ProbModel\)-\(E\)} (resp. \bjl OptSD-\(\ProbModel\)-\(N\)), \ejl one should adapt Equation~\ref{eq : DP1 utilitarian} by replacing the sum operation between $\eu(\kappa,\tau)$ and $F(i+1,\tau+\kappa)$ by a min (resp. multiplication) operation. 

\begin{proposition}\label{thm:Algo2}
 If $\Psi$ satisfies \bjl prefix independence,
\ejl problems OptSD-\(\ProbModel\)-\(U\), OptSD-\(\ProbModel\)-\(E\) and \bjl OptSD-\(\ProbModel\)-\(N\) \ejl can be solved in \(O(nm^2K(n,m,\boldsymbol{s}))\) time.
\end{proposition}

We conclude by giving a structural property 
satisfied by an optimal CSD \HG{for ESW} when  \bjl prefix independence holds.\ejl We will see that such property does not necessarily hold for USW \HG{(see Appendix~\ref{app : sec 5} and Section~\ref{sec:ic-results})}. 
\begin{restatable}{proposition}{propEgNI}\label{prop : egalitarian is non-increasing}
Under \bjl prefix independence, \ejl there exists an optimal solution to OptSD-\(\ProbModel\)-\(E\) which is non-decreasing, i.e., in which the earlier an agent picks, the less goods she gets. 
\end{restatable}

\section{Computing the expected utility of an agent}\label{sec : utility agent}

In this section, we address the computation of $EU_{\ProbModel}^{\boldsymbol{k}}(a)$.
\bjl
\bjl Prefix independence again plays a crucial role: when it is satisfied, 
$EU_{\ProbModel}^{\boldsymbol{k}}(a)$ only depends on the number of items picked by $a$, and the number of items that have been picked before $a$, but not on the number of agents who have picked before and how many items they have picked each. 
We first investigate which of our different probabilistic models satisfy it.


\bjl
\begin{proposition}
$\ProbModel \in \{\FC, \IC\}$ satisfy prefix independence.
\end{proposition}
\ejl

\begin{proof}
Consider a situation where an agent starts picking while $\tau$ goods have previously been picked. When \(\ProbModel = \FC\) or \(\ProbModel = \IC\), the probability distribution on the set $S$ of goods that have previously been picked only depends on $\tau$: for \(\ProbModel = \FC\), this probability distribution assigns probability 1 to the set composed of the $\tau$ (unanimously) most preferred goods; for \(\ProbModel = \IC\), this probability distribution assigns equal probability to all sets of size \(\tau\) and 0 to others. 
Note that, given the set $S$, the utility that the agents get is then determined by the number of goods she picks.
\end{proof}

More interestingly, the $\PL_{\nu}$ model, which generalizes $\FC$ and $\IC$, also satisfies \bjl prefix independence. \ejl 

\bjl
\begin{proposition}\label{proposition : PL condition 1}
$\ProbModel = \PL_{\boldsymbol{\nu}}$ satisfies prefix independence.
\end{proposition}
\ejl

To reason on the Plackett-Luce model, one can use the \emph{vase model metaphor} \bjl \cite{silverberg1980statistical} \ejl
Consider a vase filled with $m$ types of balls, \bjl 
the proportion of balls of type $j$ being \ejl $f(j) = \frac{\nu_{j}}{\sum_{l = 1}^m \nu_{l}}$\cjl{I removed ``(the number of balls contained in the vase is infinite, allowing the desired proportions)''}
The ranking is then generated by the following sequential process. 
At each stage, a ball is taken from the vase such that a ball of type $j$ is chosen with probability $f(j)$. 
If the ball is of a different type than the ones previously picked, it yields the next good in the ranking. In either case, the ball is put back in the vase and the process continues. 
Using this metaphor, one can prove the following lemma (whose formal proof is postponed to Appendix~\ref{app : sec 4}). 
\begin{restatable}{lemma}{lemmaPLTwo}\label{lemma : PL 2}
    \bjl Let $I = (i_1, \ldots, i_q)$ be a sequence of $q$ different indices in $[m]$. 
    Consider the following two cases:
    \begin{enumerate}
        \item[i)] Agent $a_1$ picks $q$ goods;
        \item[ii)] Agent $a_1$ picks $q_1$ goods and agent $a_2$ picks $q_2$ goods with $q_1 + q_2 = q$.
    \end{enumerate}
    For the PL model, the probability that for all $t\in [q]$,  \ejl $g_{i_t}$ is picked at timestep $t$ is the same in cases i and ii. 
\end{restatable}

\begin{proof}[Proof of Proposition~\ref{proposition : PL condition 1}]
    We recall that the preference rankings of the agents are drawn independently from $\PL_\nu$. 
    Using Lemma~\ref{lemma : PL 2} and a simple induction argument, we get that the probability of a specific sequence of \bjl $q$ \ejl consecutive picks is the same regardless of whether they were picked by one, two or more agents. This entails that the probability distribution on the set $S$ of goods that have been picked after $\tau$ timesteps only depends on the value of $\tau$. Hence, the expected utility that an agent gets when choosing $\kappa$ goods once $\tau$ have been picked only depends on the values of $\kappa$ and $\tau$.
\end{proof}


Unfortunately, things are different for the Mallows model:

\begin{restatable}{proposition}{propMallows}
   There exists $\phi\in(0,1)$ and a ranking $\mu$ such that 
   $\ProbModel = \Ml_{\phi,\mu}$ \bjl does not satisfy prefix independence.\ejl
\end{restatable}

This holds even for 3 agents and 3 goods. See Appendix~\ref{app : sec 4} for the proof. 

\paragraph{Computation of $EU_{\ProbModel}^{\boldsymbol{k}}(a)$}

\bjl Under prefix independence, we 
show how to compute $\eu(\kappa,\tau)$ efficiently,
starting by $\FC$.\ejl

\begin{proposition} \label{prop : FC}
If $\ProbModel = \FC$, 
$\eu(\kappa,\tau) = \sum_{i = \tau+1}^{\tau+\kappa} s_i$. All values $\eu(\kappa,\tau)$ can be computed in time $O(m^2)$ with the recursive formula $\eu(\kappa,\tau) = \eu(\kappa-1,\tau) + s_{\kappa+\tau}$. 
\end{proposition}

We then turn to $\ProbModel = \IC$, and show that the values $\eu(\kappa,\tau)$ can be computed using a recursive formula. 
Let $T(j,\kappa,\tau)$ denote the utility that an agent can get if she can pick $\kappa$ goods within the ones of rank in $\{j,\ldots, m\}$, given that $\tau$ of these goods have been picked by preceding agents. Then, it is clear that we have:
\begin{equation*}
\eu(\kappa,\tau) = T(1,\kappa,\tau), \forall \kappa,\tau\in [m]_0\times[m-\kappa]_0    
\end{equation*}
The key point is that there is a probability $1 - \frac{\tau}{m - j+1}$ that the good of rank $j$ is free and in this case the agent will pick this good, and a probability of $\frac{\tau}{m - j+1}$ that this good is one of the $\tau$ goods that have previously been picked. In both cases, we move to goods of rank in $\{j+1,\ldots, m\}$. In the first case, we decrease $\kappa$ by one as the agent has picked a good. In the second case, we decrease $\tau$ by 1 as we have identified one of the goods picked within the ones of rank $j$ to $m$. Hence, $\eu(\kappa,\tau)$ can be computed by the following formula:
\begin{align}
T(j,\kappa,\tau) &= (1 - \frac{\tau}{m - j+1})(s_j + T(j+1, \kappa-1, \tau)) \notag\\
+ &\frac{\tau}{m \!-\! j\!+\!1}T(j\!+\!1, \kappa, \tau\!-\!1),\notag\\ 
\forall j,\kappa,\tau& \in [m-1]\!\!\times\!\![m-j+1]_0\!\!\times\!\![m-j-\kappa+1],\label{eq : rec formula}
\end{align}
with the following base cases:
\begin{align*}
T(j,0,\tau) &= 0, \forall j,\tau\in[m]\times[m-j+1]_0\\
T(j,\kappa,0) &= \sum_{j \le i < j+\kappa} s_i,\forall j,\kappa\in[m]\times[m-j+1]_0.
\end{align*}
 By computing all values \(T(j,\kappa,\tau)\) in \(O(m^3)\) operations, we obtain the following result.
\begin{proposition} \label{prop : IC}
If $\ProbModel = \IC$, then all values \(\eu(\kappa,\tau)\) can be computed in time \(O(m^3)\) by using Equation~\ref{eq : rec formula}.   
\end{proposition}

\bjl
 Propositions~\ref{thm:Algo2}, \ref{prop : FC}, and \ref{prop : IC} imply that 
 OptSD-\(\ProbModel\)-\(x\) for $x\in \{U, E, N\}$ can be  solved in polynomial time for $\ProbModel = \FC$ and $\ProbModel=\IC$, in
 \(O(nm^2)\) for $\ProbModel = \FC$ and \(O(m^2\max(n,m))\) for $\ProbModel = \IC$, \ejl by precomputing all values $\eu(\kappa,\tau)$ before running the dynamic programming algorithm.

For $\ProbModel\not\in\{\IC,\FC\}$, one can still use GreedyESW and the dynamic programming algorithm with values $EU^{\boldsymbol{k}}_{\ProbModel}(a)$ approximated by sampling, providing close-to optimal CSDs:
\bjl 
the returned CSD is optimal 
with expected utility values 
\ejl
replaced by their approximate values.\footnote{Some mild monotonicity conditions are required on the approximated $EU^{\boldsymbol{k}}_{\ProbModel}(a)$ values for the validity of Algorithm~\ref{alg:esw}.}
\ejl

For the general $\PL_{\nu}$ model beyond $\FC$ and $\IC$, we do not know whether values $\eu(\kappa, \tau)$ can be computed exactly in polynomial time; 
however, they 
can be efficiently approximated  
by sampling preference profiles from $\ProbModel$ and averaging the utility values obtained on the samples, 
with approximation guarantees from Hoeffding's (\citeyear{hoeffding1963probability}) inequality. 

To present this guarantee, let  $u_{\kappa,\tau}(\boldsymbol{P},\boldsymbol{s})$ denote the utility value obtained by the second picker when she picks her $\kappa$ preferred (available) goods, while the first picker has picked her $\tau$ preferred ones, given the preference profile $\boldsymbol{P}$.
\begin{restatable}{proposition}{samplingBound}\label{samplingBound}
    Let $\epsilon \!>\! 0$ and $\delta \!\in\! (0,1)$ two fixed values, and $\Upsilon$ 
    an upper bound on values $\eu(\kappa,\tau)$ (e.g., $\sum_{i=1}^m s_i$). 
    
    Let $\widetilde{\eu}_{\kappa,\tau}$ be the value computed by averaging the values $u_{\kappa,\tau}(\boldsymbol{P}_i,\boldsymbol{s})$  over $N$ preference profiles $\boldsymbol{P}_i$ sampled independently from $\ProbModel$. 
    If $N \ge (\Upsilon^2\ln{(2m^2/\delta)})/2\epsilon^2$, then it holds with probability $1-\delta$ that:
    $$|\eu(\kappa,\tau) - \widetilde{\eu}_{\kappa,\tau}|\le \epsilon, \forall \kappa,\tau\in [m]\times[m-\kappa].$$
\end{restatable}

\HG{Moreover, we show that these utility values can be computed exactly in time FPT (Fixed-Parameter Tractable) with respect to parameter $m$ and XP (slicewise polynomial) with respect to $\rho$, where $\rho$ is the number of distinct values in $\nu$. 
This seems  particularly appealing as goods may often be partitioned  in categories. When $\rho = 1$, all goods are in the same category and we obtain the $\IC$ model; when $\rho$ equals 2 or 3 we obtain categories $\{$high value, low value$\}$ or $\{$high value, medium value, low value$\}$.   
\begin{restatable}{proposition}{propPLm} \label{prop : PL m}
If $\ProbModel = \PL_{\nu}$, then all values \(\eu(\kappa,\tau)\) can be computed in time \(O(4^mPoly(m))\).   
\end{restatable}
\begin{restatable}{proposition}{propPLrho} \label{prop : PL rho}
If $\ProbModel = \PL_{\nu}$, then all values \(\eu(\kappa,\tau)\) can be computed in time \(O(m^{2\rho}Poly(m))\).   
\end{restatable}}

\section{Numerical tests}\label{sec:ic-results}
We performed several experiments to explore the properties of the CSDs obtained by maximizing either USW, NSW or ESW. More precisely, we explored the impact of increasing one of the parameters, 
all other parameters being fixed. 

\paragraph{Impact of the number of goods} 
Figure~\ref{fig:results with borda} displays  the proportion of utility (left-hand side) and goods (right-hand side) obtained for $n=5$ and increasing the number of goods $m$ from 5 to 300 in steps of 5. To generate both figures, the $\IC$ model and the Borda scoring vector were used and we optimized either USW, ESW or NSW.\footnote{In Figures~\ref{fig:results with borda}, and \ref{fig : Mll and Luce}, the 1st picker corresponds to the color {\em blue} (at the bottom of each plot) while the 5th and last agent to pick corresponds to the color {\em purple} (at the top of each plot). Moreover, note that the values plotted are in fact cumulative values.}

\begin{figure}[t]
    \centering 
\begin{tikzpicture}[scale=0.65]
	\begin{axis}[
	scale = 0.8,
		xlabel = {m},
		ylabel = {\small Portion of total utility},
	const plot,
		stack plots=y,
		area style,
				ymin=0,
        axis y line = left,
        ylabel style={yshift=-0.5em},
		enlarge x limits=false]
		\foreach \n in{0,...,4}{
    \addplot table[x=m, y=U/Su\n] {./Data/5_300_Borda_utilitarian_Tout.txt}
    \closedcycle;
}
	\end{axis}
\end{tikzpicture}$~~~$\begin{tikzpicture}[scale=0.65]
	\begin{axis}[
	scale = 0.8,
		xlabel = {m},
		ylabel = {\small Portion of goods received},
	const plot,
		stack plots=y,
		ymin=0,
		area style,
        axis y line = right,
        ylabel style={yshift=0.5em},
		enlarge x limits=false,
		]
		\foreach \n in{0,...,4}{
    \addplot table[x=m, y=Nb/m\n] {./Data/5_300_Borda_utilitarian_Tout.txt}
    \closedcycle;
}
	\end{axis}
\end{tikzpicture}
    
    \begin{tikzpicture}[scale=0.65]
	\begin{axis}[
	xlabel = {m},
	ylabel = {\small Portion of total utility},
	scale = 0.8,
	const plot,
		stack plots=y,
		area style,
				ymin=0,
        axis y line = left,
        ylabel style={yshift=-0.5em},
		enlarge x limits=false]
		\foreach \n in{0,...,4}{
    \addplot table[x=m, y=U/Su\n] {./Data/5_300_Borda_egalitarian_Tout.txt}
    \closedcycle;
}
	\end{axis}
\end{tikzpicture}$~~~$\begin{tikzpicture}[scale=0.65]
	\begin{axis}[
	scale = 0.8,
		xlabel = {m},
	ylabel = {\small Portion of goods received},
	const plot,
		stack plots=y,
		ymin=0,
		area style,
        axis y line = right,
        ylabel style={yshift=0.5em},
		enlarge x limits=false,
		]
		\foreach \n in{0,...,4}{
    \addplot table[x=m, y=Nb/m\n] {./Data/5_300_Borda_egalitarian_Tout.txt}
    \closedcycle;
}
	\end{axis}
\end{tikzpicture}

\begin{tikzpicture}[scale=0.65]
	\begin{axis}[
	scale = 0.8,
	xlabel = {m},
	ylabel = {\small Portion of total utility},
	const plot,
		stack plots=y,
		area style,
				ymin=0,
        axis y line = left,
        ylabel style={yshift=-0.5em},
		enlarge x limits=false]
		\foreach \n in{0,...,4}{
    \addplot table[x=m, y=U/Su\n] {./Data/5_300_Borda_nash_Tout.txt}
    \closedcycle;
}
	\end{axis}
\end{tikzpicture}$~~~$\begin{tikzpicture}[scale=0.65]
	\begin{axis}[
	scale = 0.8,
		xlabel = {m},
		ylabel = {\small Portion of goods received},
	const plot,
		stack plots=y,
		ymin=0,
		area style,
        axis y line = right,
        ylabel style={yshift=0.5em},
		enlarge x limits=false,
		]
		\foreach \n in{0,...,4}{
    \addplot table[x=m, y=Nb/m\n] {./Data/5_300_Borda_nash_Tout.txt}
    \closedcycle;
}
	\end{axis}
\end{tikzpicture}
    \caption{\small Portion of total utility (plots on the left) 
    and of goods (right) received by each of 5 agents with $m$ increasing from 5 to 300 in steps of 5.
    Maximizing USW (plots at the top), NSW (bottom), or ESW (middle), using Borda scoring vector and $\IC$. }
    \label{fig:results with borda}
\end{figure}
    
Several comments can be made. First, as expected, in the egalitarian case (middle of Figure~\ref{fig:results with borda}), we observe that as $m$ increases, the distribution of utility received by each agent converges towards equal share.\footnote{This observation is proven formally in Appendix~\ref{app : sec 6}.}
In order to achieve this, the agents who arrive later in the sequence receive more items. 

Second, 
with %
Borda and utilitarianism, the first agent 
in the sequence
may pick more items than others (plots on top of Figure~\ref{fig:results with borda}). More generally, on this plot, the utility of an agent seems to decrease with the position in the 
    sequence.  
    
Finally, for the Borda scoring vector, egalitarian and Nash social welfare objectives tend to give similar results. 

 \paragraph{Impact of correlation} 
 We explore the impact of 
 correlation, through the parameters $\phi$ and $\boldsymbol{\nu}$ of models $\PL_{\boldsymbol{\nu}}$ and $\Ml_{\phi,\mu}$. We use the Borda scoring vector and maximize ESW. To run Algorithm~\ref{alg:esw}, we approximate the expected utility values of the agents by sampling $10000$ preference profiles from $\PL_{\boldsymbol{\nu}}$ and from $\Ml_{\phi,\mu}$ with the PrefSampling library \cite{boehmer2024guide}.
 Figure~\ref{fig : Mll and Luce} displays the utility value (plots at the bottom) and the number of items (top) received by each of 5 agents with $m = 70$ goods, for models $\PL_{\boldsymbol{\nu}}$ (right) and $\Ml_{\phi,\mu}$ (left). In the former model, we use $\boldsymbol{\nu}^x = (x^m,x^{m-1},\ldots,x^1)$ and decrease $x$ from $1.5$ (which already yields very correlated preference profiles in 
 similar to 
 $\FC$) to $1$ ($\IC$) in steps of $0.01$. In the latter model, we increase $\phi$ from 0 ($\FC$) to 1 ($\IC$) in steps of $0.02$. 
 
Several comments are in order. 
First, as can be seen in Figure~\ref{fig : Mll and Luce}, the utility values of all agents (and hence their sum) increase when $x$ decreases or $\phi$ increases. Indeed, as we come closer to $\IC$, the preferences of the agents become more different, allowing some agents to receive some of their preferred items even if they pick late in the allocation process. 
    
Second, the number of goods received by the first agents in the CSD increases while it decreases for the last ones. Indeed, as these latter agents can receive more preferred goods, the CSD needs less to compensate by giving them a high number of goods (recall that we optimize ESW). 

Third, we notice that both models $\PL_{\boldsymbol{\nu}}$ and $\Ml_{\phi,\mu}$ yield very similar plots as we decrease the level of correlation. 

A web application can be found in the supplementary material which makes it possible to explore the  characteristics of optimal CSDs for various scoring vectors, probabilistic models, number of agents and goods.

\begin{figure}[!ht]
    \centering 
\begin{tikzpicture}[scale=0.65]
	\begin{axis}[
		xlabel = {$\phi$},
		ylabel = {\small Number of items received},
        scale = 0.8,
		const plot,
		stack plots=y,
		area style,
		ymin=0,
		axis y line = left,
		ylabel style={yshift=-0.2em},
		enlarge x limits=false]
		\foreach \n in {0,...,4}{
			\addplot table[x=phi, y=Nb\n] {./Data/mallows_phi_allocation.txt}
			\closedcycle;
		}
	\end{axis}
\end{tikzpicture}$~$\begin{tikzpicture}[scale=0.65]
	\begin{axis}[
		xlabel = {$x$},
		ylabel = {\small Number of items received},
        scale = 0.8,
		const plot,
		stack plots=y,
		area style,
		ymin=0,
		axis y line = right,
		ylabel style={yshift=-0.2em},
		enlarge x limits=false,
		x dir=reverse]
		\foreach \n in {0,...,4}{
			\addplot table[x=x, y=Nb\n] {./Data/luce_x_allocation.txt}
			\closedcycle;
		}
	\end{axis}
\end{tikzpicture}

\begin{tikzpicture}[scale=0.65]
	\begin{axis}[
		xlabel = {$\phi$},
		ylabel = {\small Utility Value ($\times 10^3$)},
        scale = 0.8,
		const plot,
		stack plots=y,
		ymin=0,
        yticklabels={0,0, 1,2,3},
		area style,
		axis y line = left,
		ylabel style={yshift=-0.2em},
		enlarge x limits=false]
		\foreach \n in {0,...,4}{
			\addplot table[x=phi, y=U\n] {./Data/mallows_phi_utilities.txt}
			\closedcycle;
		}
	\end{axis}
\end{tikzpicture}\begin{tikzpicture}[scale=0.65]
	\begin{axis}[
		xlabel = {$x$},
		ylabel = {\small Utility Value ($\times 10^3$)},
        scale = 0.8,
		const plot,
		stack plots=y,
		ymin=0,
        yticklabels={0,0, 1,2,3},
		area style,
		axis y line = right,
		ylabel style={yshift=0.5em},
		enlarge x limits=false,
		x dir=reverse]
		\foreach \n in {0,...,4}{
			\addplot table[x=x, y=U\n] {./Data/luce_x_utilities.txt}
			\closedcycle;
		}
	\end{axis}
\end{tikzpicture}
\caption{Number of goods received per agent (top); \bjl expected utility value per agent (bottom) as a function of 
$\phi$ for 
$\Ml_{\phi,\mu}$ and 
$x$ for 
$\PL_{\boldsymbol{\nu}^x}$. Maximizing ESW, Borda scoring vector, $n=5$, $m=70$.\ejl}\label{fig : Mll and Luce}
\end{figure}





\section{Discussion}

The practical use of our setting raises a few questions.

First, we need to choose a distribution. 
The choice has to be tailored to the domain at hand, and distributions can be learnt using some preference learning models and techniques. If computation time is an important issue\cjl{I removed ``because we have many agents and/or items"} 
then it is wise to learn a Plackett-Luce model \bjl \cite{cheng2010label}.
 \ejl 

Second, we need to choose a scoring vector as a proxy for agents' valuations over items. Again, this depends on the specific domain at hand. For each  context, the scores can be estimated by an experiment where subjects are presented with a list of items to elicit their valuations; see Appendix~\ref{app : expe}.

Third, we need to choose a social welfare functional.
We have seen that, unsurprisingly, utilitarianism may lead to clearly unfair solutions and should be used only with care. As usual, egalitarianism may lead to a loss of efficiency, but is easier to compute or approximate; Nash is a good trade-off (see \cite{CaragiannisKMPS19} for a manifesto towards using Nash social welfare in fair division) but is hard to compute if the distribution does not satisfy \bjl prefix independence.\ejl 

Four, once a CSD is found, it is anonymous: for instance, with two agents, if the output is $(1,2)$, it does not say who should start picking. Assigning agents to positions in the sequence has no impact on {\em ex ante} social welfare, but it may have an impact on {\em ex post} social welfare (see Appendix~\ref{app : AAP}).

\section{Conclusion}

Our main messages are:
\begin{enumerate}
    \item Imposing strategyproofness does not leave much choice beyond constrained serial dictatorships.
    \item Some constrained serial dictatorships are fairer than others; it is fairer to let agents coming late in the picking sequence pick more items.
    \item Their efficiency and fairness can be measured by expected social welfare, defined by a scoring vector, a distribution over profiles, and a social welfare functional.
    \item Depending on the social welfare functional notion and the distribution over profiles, the optimal sequence can be:
    \begin{itemize}
        \item polynomial-time computable,
        \item efficiently approximated by sampling,
        \item or hard to approximate by sampling.
    \end{itemize}
\end{enumerate}
\ejl The following table summarizes the results obtained for the three social welfare functionals under different distributions. 
\bjl PI means that prefix independence is satisfied, \ejl
poly means ``polynomial-time computable'', and approx means ``efficiently approximable by sampling''.
\begin{center}
\begin{tabular}{c|ccccc}
$\ProbModel$ & PI & $EU^{\boldsymbol{k}}_{\ProbModel}(a_i)$ & Egal & Nash & Uti\\ \hline
$\FC$ & yes & poly & poly& poly& poly\\
$\IC$ & yes & poly& poly& poly& poly\\
$\PL_{\boldsymbol{\nu}}$ & yes & approx & approx & approx & approx \\
$\Ml_{\phi,\mu}$ & no & approx & approx & ? & ?
\end{tabular}
\end{center}



For the sake of the exposition, we focused on goods. 
If items were bads (e.g., chores) instead of goods,
\bjl 
a similar methodology would work, with values in the scoring vector representing costs.  
\ejl 
Of course, agents coming first in the sequence should now take {\em more} items than those coming later. 



\newpage

\bibliographystyle{unsrtnat}
\bibliography{biblio}


~

\newpage

\appendix

\begin{center}
    {\Large \bf Appendix}
\end{center}

\section{Omitted Proofs of Section 3}\label{app : sec 3}

\obsOne*
\begin{proof}
Choosing $n$ numbers \((k_1, \ldots, k_n)\) matching the definition amounts to partition 
  \([m]\) into \(n\) subintervals, which in turn amounts to choose \(n-1\) ``separation bars''. Said otherwise, this comes down to choose \(n-1\) increasing numbers \(l_1 \leq \ldots \leq l_{n - 1}\) among \(m + 1\). Here, \(k_i = l_i - l_{i-1}\), with the convention that \(l_0 = 0\). This problem can be equivalently formulated as the one of drawing \(n-1\) \emph{different} numbers among \(n+m-1\). For each such draw, we can obtain a set of increasing numbers \(l'_1 \leq \ldots \leq l'_{n - 1}\) between \(1\) and \(n+m-1\), that can be cast to increasing numbers between \(0\) and \(m\) by choosing \(l_i = l'_i - i\). Since there are \(\binom{n+m-1}{n-1}\) subsets of \(n-1\) elements among \(n+m-1\), we obtain the result. 
\end{proof}
\section{Omitted Proofs of Section 4} \label{app : sec 5}

We now show that 
prefix independence 
entails a specific property for the optimal solutions of OptSD-\(\ProbModel\)-\(E\). 

\propEgNI*
\begin{proof}
Let \(\eu(\kappa,\tau)\) denote the utility obtained by an agent if we allocate \(\kappa\) items to her knowing that \(\tau\) items have already been allocated. 
As items have positive valuations, it is easy to prove that \(\eu(\kappa,\tau)\) is non-decreasing in \(\kappa\) and non-increasing in \(\tau\). 
Let us consider a solution \(\boldsymbol{k} = (k_1,\ldots, k_n)\), which is not a non-decreasing vector. Then, there exists \(i\in [n-1]\) such that \(k_i > k_{i+1}\). We set \(\tau_i = \sum_{j=1}^{i-1} k_j\), \(\tau_{i+1} = t_i + k_i\) and \(\tau_{i+1}' = t_i + k_{i+1}\). From the properties of function $\eu$, it is clear that \(\min(\eu(k_i,\tau_i),\eu(k_{i+1},\tau_{i+1})) = \eu(k_{i+1},\tau_{i+1}) \le \min(\eu(k_{i+1},\tau_i),\eu(k_{i},\tau_{i+1}'))\). Hence, by swapping \(k_i\) and \(k_{i+1}\) in \(\boldsymbol{k}\), we do not decrease the egalitarian score of $\boldsymbol{k}$ (note that this swap does not affect the utility values received by agents other than \(a_i\) and \(a_{i+1}\)). The repetition of this argument shows that there exists an optimal solution to OptSD-\(\ProbModel\)-\(E\) which is a non-decreasing vector.
\end{proof}

We will see in the following example that this property 
fails for utilitarian social welfare. For Nash social welfare, we conjecture it holds, but so far we do not have a proof.

\begin{example}\label{ex IC}
Let \(n = 3\), \(m = 7\) and the Borda scoring vector. By dynamic programming we find the values \(\eu(\kappa,\tau)\) displayed on Table~\ref{tab:utilities in ex}.
\begin{table}[!ht]
    \caption{Utilities $\eu(\kappa,\tau)$ in Example~\ref{ex IC}; $m = 7$, $\IC$ model.}
    \centering
    \scalebox{1}{\begin{tabular}{c||c|c|c|c|c|c|c|c}
    $\kappa \backslash \tau$ &  0 & 1 & 2 & 3 & 4 & 5 & 6 & 7\\
    \hline
    0     &  0 & 0 & 0 & 0 & 0 & 0 & 0 & 0\\ 
    1     &  7 & 6.86 & 6.67 & 6.4 & 6 & 5.33 & 4 & -\\ 
    2     & 13 & 12.57 & 12 & 11.2 & 10 & 8 & - & - \\ 
    3     & 18 & 17.14 & 16 & 14.4 & 12 & - & - & - \\ 
    4     & 22 & 20.57 & 18.67 & 16 & - & - & - & - \\ 
    5     & 25 & 22.86 & 20 & - & - & - & - & -\\ 
    6     & 27 & 24 & - & - & - & - & - & -\\ 
    7     & 28 & - & - & - & - & - & - & - 
\end{tabular}}
    
    \label{tab:utilities in ex}
\end{table}
For USW, we obtain 
an optimal vector $\boldsymbol{k} = (3,2,2)$, yielding expected social welfare 37.2. For maximizing ESW and NSW, we obtain $\boldsymbol{k} = (2,2,3)$, with expected social welfare 12 and 1872 respectively. Note that the optimal vectors for ESW and NSW may be different: with \(n=4\) and \(m=10\) and the Borda scoring vector,  
\(\boldsymbol{k} = (2,2,2,4)\) is optimal for ESW ; and \(\boldsymbol{k} = (2,2,3,3)\) for NSW.
\end{example}

We see on Example \ref{ex IC} that the optimal sequence for utilitarian social welfare, IC, and Borda scoring, is not non-decreasing, and thus clearly not fair: the first agent in the sequence not only has a larger choice of items but picks one more than the other two! It is not new that utilitarianism may clash fairness when looking for optimal CSDs: for instance, it is known that the optimal sequence for utilitarianism, Borda scoring, FI, $n = 2$ and $m$ even is perfect alternation $1212\ldots 12$, which is obviously not fair \cite{KalinowskiNW13}. (Still, we continue to include utilitarianism in our study, first for the sake of comparison, and second because utilitarianism is relevant in some situations.) 

\

The following result concerns utilitarian social welfare:

\begin{proposition}\label{pro:utilitarianIsHard}
    Given a scoring vector $s$, a finite set $\mathcal{P}$ of $n$-agent preference profiles over a set of goods and an integer $K$, the problem of determining whether there is a CSD $\boldsymbol{k}$ such that the average utilitarian social welfare of $\boldsymbol{k}$ over all profiles of $\mathcal{P}$ is greater than or equal to $K$ is NP-complete.
\end{proposition}


\begin{proof}
    We will prove the proposition by reduction from Exact-Cover-By-3-Sets (X3C):
    \begin{cproblem}{X3C}
        \textbf{Input:} A set $\mathcal{X} = \{x_1,\ldots, x_n\}$ of $n$ elements; a collection $\mathcal{S} = \{S_1,\ldots, S_m\}$ of $m$ subsets such that $\forall S\in \mathcal{S}$, $S$ contains exactly three elements of $\mathcal{X}$.\\
        \textbf{Question:} Does there exist a subcollection $\mathcal{C} \subseteq \mathcal{S}$, such that $\bigcup_{S\in \mathcal{C}} S = \mathcal{X}$ and $S\cap S' = \emptyset, \forall S,S'\in \mathcal{C}$.
    \end{cproblem}

    Let $(\mathcal{X}, \mathcal{S})$ be an X3C instance. From that instance, we create an instance of our problem with $n$ goods and $3m$ agents such that the set of goods is exactly $\mathcal{X}$ (by notation abuse), and such that there are 3 agents $a_S^1$, $a_S^2$, and $a_S^3$ for each set $S \in \mathcal{S}$.

    For each set $S = \{a,b,c\}\in \mathcal{S}$, we create 3 rankings $r_S^a$, $r_S^b$, $r_S^c$ such that $r_S^a$ starts with $a$, $r_S^b$ starts with $b$, and $r_S^c$ starts with $c$ (the rest of the ranking does not matter). Then for each pair of agents $(a_S^i, a_T^j)$ with $S \neq T$, we create 9 profiles as follows:

    \begin{itemize}
    \item if $S = \{a,b,c\}$, agents $a_S^1, a_S^2, a_S^3$ have rankings from $\{(r_S^a, r_S^b, r_S^c), (r_S^b, r_S^c, r_S^a), (r_S^c, r_S^a, r_S^b)\}$;
    \item if $T = \{d,e,f\}$, agents $a_T^1, a_T^2, a_T^3$ have rankings from $\{(r_T^d, r_T^e, r_T^f), (r_T^e, r_T^f, r_T^d), (r_T^f, r_T^d, r_T^e)\}$;
    \item for each $X = \{x, y, z\}$ different from $S$ and $T$, agents $a_X^1, a_X^2, a_X^3$ have rankings $(r_X^{x}, r_X^{y}, r_X^{z})$.
    \end{itemize}

    This thus makes $\frac{81m(m-1)}{2}$ profiles in total. Now, the scoring vector is such that the top object has utility 1 while all the other items have utility 0. We will prove that there exists an exact cover iff there exists a CSD with average utility at least $n$.
    
    $(\Rightarrow)$ If $\mathcal{C} \subseteq \mathcal{S}$ is an exact cover, let agents $a_S^i$ for $S\in \mathcal{C}$ and $i \in \{1,2,3\}$ pick one item. By construction, for each profile, these agents will all have their top choices yielding a utility of $n$.
    
    $(\Leftarrow)$ Conversely, let $\boldsymbol{k}$ be a CSD yielding utility $n$ for all profiles. Necessarily $\boldsymbol{k}$ gives exactly one item to $n$ agents. Let $a_S^i$ and $a_T^j$ be two such agents, with $S \neq T$. Suppose that $S \cap T \neq \emptyset$ and let $x \in S \cap T$. In the profile where $a_S^i$ has ranking $r_S^x$ and $a_T^j$ has ranking $r_T^x$ both agents have the same top object. Hence, the utility yielded by $\boldsymbol{k}$ is necessarily strictly lower than $n$ for this profile, a contradiction with the hypothesis. This proves that all the agents $a_S^i$ and $a_T^j$ receiving an object in $\boldsymbol{k}$ are such that either $S = T$ or $ S \cap T = \emptyset$. Hence, we necessarily obtain $n/3$ sets who are pairwise disjoint and hence provide an exact cover.

\end{proof}

\section{Omitted Proofs of Section 5} \label{app : sec 4}

To prove Lemma~\ref{lemma : PL 2}, we first need to prove the following Lemma. 
\begin{lemma}\label{lemma : incomplete ranking}
    Let $\tilde{r}: g_{i_1} \succ g_{i_2} \succ \ldots \succ g_{i_q}$ be an incomplete ranking over $\ItemSet$ with $q\le m$. Under the PL model, the probability to generate a ranking $r$ which is a consistent extension of $\tilde{r}$  is equal to:
    $$\prod_{j=1}^q \frac{\nu_{i_j}}{\sum_{l = j}^q \nu_{i_l}}$$
\end{lemma}
\begin{proof}
Let $S\in \ItemSet$ be the set of goods on which $\tilde{r}$ express preferences. 
The lemma can easily be derived from the vase model metaphor. Consider the following slightly different sequential process. At each stage, a ball is taken from the vase such that a ball of type $j$ is chosen with probability $f(j)$. If the ball is of a different type than the ones previously picked, \emph{and is a ball of a type corresponding to an element of $S$}, then it yields the next item in the ranking. In either case, the ball is put back in the vase and the process continues. This process generates a ranking on the elements of $S$ according to the original PL model. We get that the probability of $\tilde{r}$ is: 
$$\prod_{j=1}^q \frac{\nu_{i_j}}{\sum_{l = j}^q \nu_{i_l}}.$$
\end{proof}

\lemmaPLTwo*
\begin{proof}
    We will show that in both cases, the probability that for all $t\in [q]$ $g_{i_t}$ is picked at timestep $t$ is:
    $$ p_I = \prod_{j=1}^q \frac{\nu_{i_j}}{\sum_{l = j}^q \nu_{i_l} + \sum_{p\in [m]\setminus I} \nu_p}. $$
    
    In case i), $g_{i_{t}}$ is picked at timestep $t$ for all $t\in [q]$ if $\Rank_{\boldsymbol{P}}^{a_1}(g_{i_{t}}) = t$  for all $t\in [q]$. Under the PL model, this occurs with probability $p_I$. 
    
    In case ii), let $I_1$ (resp. $I_2$) be the subsequence composed of the $q_1$ first (resp. $q_2$ last) elements of $I$ and $S_1 = \{g_{i_1}, \ldots, g_{i_{q_1}}\}$ (resp. $S_2 = \{g_{i_{q_1+1}}, \ldots, g_{i_q}\}$).  
    In the PL model, the probability that $a_1$ picks $g_{i_t}$ at timestep $t$ for all $t\in [q_1]$ is:
    $$p_I^1=\prod_{j=1}^{q_1} \frac{\nu_{i_{j}}}{\sum_{l = j}^q \nu_{i_{l}} + \sum_{p\in [m]\setminus I} \nu_p}.$$
    
    Then, the probability that $a_2$ picks $g_{i_{t}}$ at timestep $t$ for all $t\in [q]\setminus [q_1]$ corresponds to the probability that $g_{i_{t}}$ is ranked at position $t-q_1$, when restricting ourselves to the goods in $\ItemSet \setminus S_1$. 
    Put another way, goods in $S_2$ should be ranked in the top $q_2$ positions in the partial ranking which only ranks goods in $\ItemSet \setminus S_1$. 
    Let $\tilde{r} : g_{i'_1} \succ g_{i'_2} \succ \ldots \succ g_{i'_{m-q_1}}$ be one such ranking over $\ItemSet \setminus S_1$ with $g_{i'_l} = g_{i_{(q_1+l)}}$ for all $l \in [q_2]$. 
    Resorting to Lemma~\ref{lemma : incomplete ranking}, $\tilde{r}$ occurs with probability:
    $$\prod_{j=1}^{q_2} \frac{\nu_{i'_j}}{\sum_{l = j}^{m-q_1} \nu_{i'_l}}\prod_{j=q_2+1}^{m-q_1} \frac{\nu_{i'_j}}{\sum_{l = j}^{m-q_1} \nu_{i'_l}}.$$    
    By marginalizing over all such rankings, the second product vanishes as we obtain the sum of probabilities over all rankings over $\ItemSet\setminus(S_1\cup S_2)$ under $\PL_{\nu}$. Hence, we get probability:
    $$p_I^2 = \prod_{j=1}^{q_2} \frac{\nu_{i'_j}}{\sum_{l = j}^{q_2} \nu_{i'_l} + \sum_{p\in [m]\setminus I} \nu_p}.$$   
    The product of $p_I^1$ and $p_I^2$ yields exactly $p_I$.
\end{proof}

\propMallows*
\begin{proof}
To see why, consider the case of $m=3$ items and $n=3$ agents, and a Mallows model with center $\mu = a \succ b \succ c$ and parameter $\phi$. The probability of a ranking $r$ to occur is $\phi^{d_{KT}(r, \mu)}/C$ where $C = 1 + 2\phi + 2\phi^2 + \phi^3$ is a normalization constant. The probabilities of the different rankings are described on the table below.

\begin{center}
    \begin{tabular}{c|c|c}
       & Probability  &  Ranking \\\hline
       1& $1/C$ & $a \succ b \succ c$ \\
      2&  $\phi/C$ & $a \succ c \succ b$ \\
      3&  $\phi/C$ & $b \succ a \succ c$ \\
      4& $\phi^2/C$  & $b \succ c \succ a$ \\
      5& $\phi^2/C$  & $c \succ a \succ b$ \\
      6& $\phi^3/C$  & $c \succ b \succ a$
    \end{tabular}
\end{center}

    If the expected utility that an agent gets in the allocation process only depend on the number of goods that she picks and that have been picked before she started picking, then it should be the same for $a_3$ if (1) $a_1$ and $a_2$ both picked one and if (2) $a_1$ picked two and $a_2$ picked zero items. We will see that it is not the case. Let us assume that $\boldsymbol{s} = (1,1,0)$ so that $a_3$ only cares about not getting her least preferred item. 
    The probability that $a$ (resp. $b$, $c$) is her least preferred good is $\phi^2(1+\phi)/C$ (resp. $\phi(1+\phi)/C$, $(1+\phi)/C$).
    
    We can compute the probability that the first two goods picked (by agent 1 and 2) are $a$ and $b$ in the two cases. In the first case, either agent 1 picks $a$ first (so it has ranking 1 or 2) and agent 2 picks $b$ (so it has ranking 1, 3 or 4), or agent 1 picks $b$ first (ranking 3 or 4) and agent $2$ picks $a$ (ranking 1 or 2 or 3). The probability of this is

    \begin{align*}
   & \frac{1+\phi}{C} \frac{1+\phi+\phi^2}{C} + \frac{\phi+\phi^2}{C} \frac{1+\phi+\phi}{C}  \\
    =& \frac{(1+\phi)(1 + 2\phi + 3\phi^2)}{C^2} \\
    \end{align*}

    In the second case, the probability that agent 1 picks both $a$ and $b$ is the probability of rankings 1 and 3, which is:

    \begin{align*}
        \frac{1+\phi}{C} &= \frac{(1+\phi)(1+2\phi+2\phi^2+\phi^3)}{C^2}. 
    \end{align*}
    
    Similarly: 
    \begin{itemize}
    \item The probability that items $a$ and $c$ are picked by the first two agents is $\phi(1+\phi)/C$ in cases 1 and 2. 
    \item The probability that items $b$ and $c$ are picked by the first two agents is $\phi^3(1+\phi)(3+2\phi+\phi^2)/C^2$ in case 1 and $\phi^2(1+\phi)/C$ in case 2.
    \end{itemize}
    As a result, the probability that $a_3$ gets her least preferred good is:
    \begin{align*}
        \frac{(1+\phi)^2}{C^2}\frac{(1+2\phi+3\phi^2)}{C} &+ \phi^2 \frac{(1+\phi)^2}{C^2} \\
        +& \phi^4 \frac{(1+\phi)^2}{C^2}\frac{(3\phi+2\phi^2+\phi^3)}{C}
    \end{align*}
    in case 1 and
    \begin{align*}
        \frac{(1+\phi)^2}{C^2} + \phi^2 \frac{(1+\phi)^2}{C^2} + \phi^4 \frac{(1+\phi)^2}{C^2}
    \end{align*}
    in case 2. These are two different values, e.g., for $\phi = 0.5$ we obtain 0.440 in case 1 and 0.429 in case 2. 
\end{proof}

\samplingBound*
\begin{proof}

Let $\eu_{\kappa,\tau}^i$ represent the utility of an agent receiving $t$ items after $k$ items have already been taken, for the preference profile $P_i$ such that $\widetilde{\eu}_{\kappa,\tau} = \sum_{i=1}^n \eu_{\kappa,\tau}^i$.

We aim to show that with probability at least $1 - \delta$, the sampled value of utility is close to the expected utility within $\epsilon$. Formally, we want:
\[ \Pr\left(\left|\frac{\widetilde{\eu}_{\kappa,\tau}}{n} - \eu_{\kappa,\tau}\right| \leq \epsilon\right) \geq 1 - \delta \]

However, this form is not directly suitable for applying Hoeffding's inequality, so we first perform some manipulations. We have:
\[
\begin{aligned}
\Pr\left(\left|\frac{\widetilde{\eu}_{\kappa,\tau}}{n} - \eu_{\kappa,\tau}\right| \leq \epsilon\right) &= 1 - \Pr\left(\left|\frac{\widetilde{\eu}_{\kappa,\tau}}{n} - \eu_{\kappa,\tau}\right| > \epsilon\right) \\
&\geq 1 - \Pr\left(\left|\frac{\widetilde{\eu}_{\kappa,\tau}}{n} - \eu_{\kappa,\tau}\right| \geq \epsilon\right)
\end{aligned}
\]

Hoeffding's inequality gives us:
\[ \Pr\left(\left|\frac{\widetilde{\eu}_{\kappa,\tau}}{n} - \eu_{\kappa,\tau}\right| \geq \epsilon\right) \leq 2 \exp\left(-\frac{2n^2\epsilon^2}{\sum_{i=1}^n (b_i - a_i)^2}\right) \]

Given that $\eu_{\kappa,\tau}^i$ are identically distributed, we can rewrite $\sum_{i=1}^n (b_i - a_i)^2$ as $n \times (b - a)^2$, where $a = \sum_{i=1}^t s(i)$ when the agent receives the items they like the least, and $b = \sum_{i=1}^t s(m-i)$ when the agent receives the items they prefer the most. For simplicity, we set $a = 0$ (no selected item) and $b = \Upsilon$ (all items are selected).

Substituting these expressions into the inequality, we get:
\[ \Pr\left(\left|\frac{\widetilde{\eu}_{\kappa,\tau}}{n} - \eu_{\kappa,\tau}\right| \leq \epsilon\right) \geq 1 - 2 \exp\left(-\frac{2n\epsilon^2}{\Upsilon^2}\right) \]

We want this probability to be at least $1 - \delta$, which gives us:
\[ 1 - 2 \exp\left(-\frac{2n\epsilon^2}{\Upsilon^2}\right) \geq 1 - \delta \]

This rearranges to:
\[ \exp\left(\frac{2n\epsilon^2}{\Upsilon^2}\right) \geq \frac{2}{\delta} \]


Thus, for the probability to be at least $1 - \delta$, the number of samples $n$ must satisfy the inequality below.

\[ n \geq \frac{\Upsilon^2 \ln\left(\frac{2}{\delta}\right)}{2\epsilon^2} \]

Then let $E_{\kappa,\tau}$ denote the event that $\widetilde{\eu}_{\kappa,\tau}$ is an $\epsilon$-additive approximation of $\eu_{\kappa,\tau}$. This event occurs with probability $1 - \delta$.

We want to determine the probability that all events $E_{\kappa,\tau}$ hold simultaneously. This is equivalent to computing:
\[ \Pr\left(\cap_{\kappa,\tau} E_{\kappa,\tau}\right) = 1 - \Pr\left(\cup_{\kappa,\tau} \overline{E_{\kappa,\tau}}\right) \]

Applying the union bound, we get:
\[ \Pr\left(\cap_{\kappa,\tau} E_{\kappa,\tau}\right) \geq 1 - \sum_{k,t \in [m]} \Pr\left(\overline{E_{\kappa,\tau}}\right) \]

Given that $\Pr\left(\overline{E_{\kappa,\tau}}\right) \leq \delta$, we have:
\[ \Pr\left(\cap_{\kappa,\tau} E_{\kappa,\tau}\right) \geq 1 - m^2 \delta \]

For this probability to be at least $1 - \Delta$, we require:
\[ 1 - m^2 \delta \geq 1 - \Delta \]

This rearranges to:
\[ \delta \leq \frac{\Delta}{m^2} \]

Substituting this condition into the sample size inequality, we find that $n$ must satisfy:
\[ n \geq \frac{\Upsilon^2 \ln\left(\frac{2m^2}{\Delta}\right)}{2\epsilon^2} \]

\end{proof}
\propPLm* 
\begin{proof}
Let $S\subseteq G$ be a set of goods and $g\in S$, we denote by $\mathtt{ft}(g,S) = \nu_g/(\sum_{g'\in S} \nu_{g'})$, the probability that $g$ is ranked first among the elements of $S$ according to $\PL_{\nu}$.

The proof relies on recursive equations. 
Let us consider the following setting.
The picker under consideration should  pick $\kappa$ goods within a set $S$ of goods occupying the $|S|$ last positions of her ranking. Goods in $G\setminus S$, occupy the top $m-S$ ranks in her ranking and have already been picked, either by her or by other agents. 
Moreover, the set $S'\subseteq S$ have already been picked by previous pickers. 
We are interested in computing the expected utility $U(\kappa,S,S')$ of the $\kappa$ picks of the agent in such a situation. 
We argue that $U(\kappa,S,S')$ satisfies the following recursive equations:
\begin{align}
U(\kappa, S,S’) &= \sum_{g\in S\cap{S’}} \ \mathtt{ft}(g,S)   U(\kappa, S\setminus\{g\}, S’\setminus\{g\}) \notag\\ 
+ \sum_{g\in S\setminus S’}& \mathtt{ft}(g,S)(s_{m-|S|+1} + U(\kappa - 1, S\setminus\{g\}, S')) \label{eq : pl fpt m 1}\\
U(0, S,S’) &= 0 \qquad \forall S,S' \label{eq : pl fpt m 2}\\
U(\kappa, S,\emptyset) &= \sum_{i=1}^{\kappa} s_{m-|S|+i} \qquad\forall \kappa,S \label{eq : pl fpt m 3}
\end{align}
In Equation~\ref{eq : pl fpt m 1}, we consider all possible goods which could be placed at rank $m-|S|+1$. This occurs for good $g\in S$ with probability $\mathtt{ft}(g,S)$. 
If $g\in S'$, this good as already been picked and the agent still has to picked $\kappa$ goods within the goods in $S\setminus\{g\}$ which are ranked in last position, hence we consider $U(\kappa, S\setminus\{g\}, S’\setminus\{g\})$. 
If $g\not\in S'$, this good is picked by the agent leading to a utility $s_{m-|S|-1}$ and the agent still has to picked $\kappa-1$ goods within the goods in $S\setminus\{g\}$ which are ranked in last position, hence we consider $U(\kappa-1, S\setminus\{g\}, S’)$. Equations~\ref{eq : pl fpt m 2} and \ref{eq : pl fpt m 3} provide the base cases. 

Next, we consider the probability $P(S,S')$ that goods in $S$ are ranked in the top $|S|$ positions among a set $S'$ of goods. 
$P(S,S')$ trivially satisfies the following recursive equation. 
\begin{align*}
P(S,S') &= \sum_{g\in S} \mathtt{ft}(g,S') P(S\setminus \{s\}, S'\setminus \{s\})\\
P(\emptyset,S') &= 1
\end{align*}
Once values $U(\kappa,S,S')$ and $P(S,S')$ have been computed, we use the fact that: 
$$
\eu(\kappa,\tau) = \sum_{S'\subset G , |S'| = \tau} P(S',G) U(\kappa,G,S').
$$
To do the computation, we use memoization to store the different values $U(\kappa,S,S')$ and $P(S,S')$ (which represents $O(m4^m)$ values), avoid redundant computation, and obtain the desired time complexity.
\end{proof}
\propPLrho*
\begin{proof}
Let $\overline{m} = (m_1,m_2,\ldots,m_\rho)$ be a vector representing a set containing $m_i$ goods of value $\nu_i$. For $i \in [\rho]$, we denote by $\mathtt{ft}(i,\overline{m}) = \nu_im_i/(\sum_{j \in [\rho]} m_j\nu_{j})$, the probability that a good with parameter $\nu_i$ is ranked first among the elements of the set represented by $\overline{m}$ according to $\PL_{\nu}$. We further define $\overline{m}[-i]$ as the vector defined as $\overline{m}[-i]_i = \overline{m}_i -1$ and $\overline{m}[-i]_j = \overline{m}_j$ for $j\in [\rho]\setminus\{i\}$ and $\mathtt{sum}(\overline{m}) = \sum_{i=1}^{\rho}m_i$.

The proof relies on recursive equations. 
Let us consider the following setting.
The picker under consideration should  pick $\kappa$ goods within a set $S$ of goods occupying the $|S|$ last positions of her ranking. This set is represented by a vector $\overline{m}_S$. Goods in $G\setminus S$, occupy the top $m-S$ ranks in her ranking and have already been picked, either by her or by other agents. 
Moreover, the set $S'\subseteq S$ have already been picked by previous pickers. The set $S'$ is represented by a vector $\overline{m}_{S'}$ such that $\overline{m}_{S'} \le \overline{m}_{S}$.   
We are interested in computing the expected utility $U(\kappa,\overline{m}_{S},\overline{m}_{S'})$ of the $\kappa$ picks of the agent in such a situation. 
We argue that $U(\kappa,\overline{m}_{S},\overline{m}_{S'})$ satisfies the following recursive equations:
\begin{align}
U(\kappa, \overline{m}_{S},\overline{m}_{S'}) &= \sum_{i\in [\rho]} \ \mathtt{ft}(i,\overline{m}_{S}) (\frac{m_i'}{m_i} U(\kappa, \overline{m}_{S}[-i],\overline{m}_{S'}[-i]) \notag\\ 
+ \frac{m_i - m_i'}{m_i}&(s_{m-\mathtt{sum}(\overline{m}_S)+1} + U(\kappa-1, \overline{m}_{S}[-i],\overline{m}_{S'}))\label{eq : xp rho 1}\\
U(0, \overline{m}_{S},\overline{m}_{S'}) &= 0 \qquad \forall S,S' \label{eq : xp rho 2}\\
U(\kappa, \overline{m}_{S},\overline{m}_\emptyset) &= \sum_{i=1}^{\kappa} s_{m-\mathtt{sum}(\overline{m}_S)+i} \qquad\forall \kappa,S \label{eq : xp rho 3}
\end{align}
In Equation~\ref{eq : xp rho 1}, we consider all possible goods which could be placed at rank $m-|S|+1= m-\mathtt{sum}(\overline{m}_S)+1$, considering only their value in $\nu$. 
This occurs for a good $g$ with parameter $\nu_i$ with probability $\mathtt{ft}(i,\overline{m}_S)$. Let us assume that this good as indeed value $\nu_i$. This good is in (resp. out of) $S'$ with probability $m'_i/m_i$ (resp $(m_i-m'_i)/m_i$).
If $g\in S'$, this good as already been picked and the agent still has to picked $\kappa$ goods within the goods in $S\setminus\{g\}$ which are ranked in last position, hence we consider $U(\kappa, \overline{m}_{S}[-i],\overline{m}_{S'}[-i])$. 
If $g\not\in S'$, this good is picked by the agent leading to a utility $s_{m-|S|-1}$ and the agent still has to picked $\kappa-1$ goods within the goods in $S\setminus\{g\}$ which are ranked in last position, hence we consider $U(\kappa-1, \overline{m}_{S}[-i],\overline{m}_{S'})$. Equations~\ref{eq : xp rho 2} and \ref{eq : xp rho 3} provide the base cases. 

Next, we consider the probability $P(\overline{m},\overline{m}')$ that a set of good $S$ with vector $\overline{m}_S = \overline{m}$ are ranked in the top $\mathtt{sum}(m)$ positions among a set $S'$ of goods with vector $\overline{m}_{S'} = \overline{m}'$. 
$P(\overline{m},\overline{m}')$ trivially satisfies the following recursive equation. 
\begin{align*}
P(\overline{m},\overline{m}') &= \sum_{i\in [\rho], \overline{m}_i\neq 0} \mathtt{ft}(i,\overline{m}') P(\overline{m}[-i],\overline{m}'[-i]])\\
P(\overline{m}_\emptyset,\overline{m}') &= 1
\end{align*}
Once values $U(\kappa,\overline{m},\overline{m}')$ and $P(\overline{m},\overline{m}')$ have been computed, we use the fact that: 
$$
\eu(\kappa,\tau) = \sum_{\overline{m}' \le \overline{m}_G , \mathtt{sum}(\overline{m}') = \tau} P(\overline{m}',\overline{m}_G) U(\kappa,\overline{m}_G,\overline{m}').
$$
To do the computation, we use memoization to store the different values $U(\kappa,\overline{m}_S,\overline{m}_{S'})$ and $P(\overline{m},\overline{m}')$ (which represents $O(m\times m^{2\rho})$ values), avoid redundant computation, and obtain the desired time complexity.
\end{proof}

\section{Omitted Proofs of Section 6}\label{app : sec 6}

\begin{figure}[t]
    \centering
\begin{tikzpicture}[scale=0.65]
	\begin{axis}[
	scale = 0.8,
		xlabel = {m},
	ylabel = {\small Portion of total utility},
	const plot,
		stack plots=y,
		area style,
				ymin=0,
				axis y line = left,
        ylabel style={yshift=-0.5em},
		enlarge x limits=false]
		\foreach \n in{0,...,4}{
    \addplot table[x=m, y=U/Su\n] {./Data/5_300_Lexicographic_egalitarian_Tout.txt}
    \closedcycle;
}
	\end{axis}
\end{tikzpicture}$~~~$\begin{tikzpicture}[scale=0.65]
	\begin{axis}[
	scale = 0.8,
	const plot,
	xlabel = {m},
	ylabel = {\small Portion of goods received},
		stack plots=y,
		ymin=0,
		area style,
        axis y line = right,
        ylabel style={yshift=0.5em},
		enlarge x limits=false,
		]
		\foreach \n in{0,...,4}{
    \addplot table[x=m, y=Nb/m\n] {./Data/5_300_Lexicographic_egalitarian_Tout.txt}
    \closedcycle;
}
	\end{axis}
\end{tikzpicture}
    \caption{\small Portion of the total utility (plot on the left) and of goods (right) received by each of 5 agents with $m$ increasing from 5 to 300 in steps of 5. Maximizing ESW and using the lexicographic scoring vector and $\FI$.}
    \label{fig:results with lex}
\end{figure}
Figure~\ref{fig:results with lex} displays our results using the lexicographic scoring vector (where $s_i = 2^{m-i}$), showing the proportion of utility (left-hand side) and goods (right-hand side) obtained for $n=5$ and increasing the number of goods $m$ from 5 to 300 in steps of 5. The IC model was used and we optimized ESW. We see that because of the lexicographic scoring vector, almost all goods are given to the last picker to compensate for this disadvantageous position.

We now provide a property of optimal CSDs when maximizing ESW and for a large number of goods. 
\begin{proposition}\label{propEqualsWhenMinc}
Assume there are $n$ agents ($n$ being fixed), and let $\mathcal{K}^*_m$ be the set of allocation vectors maximizing $SW_\IC^E(\boldsymbol{k})$ when there are $m$ items. Then, if one uses the lexicographic (resp. Borda) scoring vector, then for any value $\epsilon>0$, there exists a value $M$ (dependent on $n$) such that $m \ge M$ implies that $(\max_{a\in \AgentSet} EU_{\IC}^{\boldsymbol{k}}(a) - \min_{a\in \AgentSet} EU_{\IC}^{\boldsymbol{k}}(a))/(\sum_{a\in \AgentSet} EU_{\IC}^{\boldsymbol{k}}(a)) < \epsilon$ for any (resp. an element) ${\boldsymbol{k}}\in \mathcal{K}^*_m$.   
\end{proposition}
\begin{proof}
We treat the cases of the Borda and lexicographic scoring vectors  using two different proofs. 
\paragraph{The lexicographic case.} 
Let us fix $n$ the number of agents, a value $\epsilon>0$, and $l$ an integer such that $\epsilon/2\ge 1/2^l$. 
As $m$ ($>nl$) increases, we can ensure with a probability tending towards one that each agent receives her $l$ preferred items. 
Indeed, because of the independence and uniformity assumptions of the $\IC$ model, the probability that these sets of items do not intersect tends towards one. 
Hence, for $\epsilon>0$, there exists a value $M$ such that if $m>M$, this event (the sets being disjoint) occurs with probability $1-\epsilon/2$. 
Using the lexicographic scoring vector, this event implies that each agent will receive a proportion greater than or equal to $(1-1/2^{l})$ of the total utility she gives to items, i.e.,  $\sum_{j=1}^{m}s_m = 2^m - 1$. 
To sum up, if $m\ge M$, we can ensure that each agent receives an expected utility greater than: 
\begin{multline*}
(1-\epsilon/2)(1-1/2^{l})(2^m-1) \ge (1-\epsilon/2)^2(2^m-1)\\
\ge (1-\epsilon)(2^m-1).   
\end{multline*}

Moreover, note that this expected utility is upper bounded by $\sum_{j=1}^{m}s_m = 2^m - 1$ and that $\sum_{a\in \AgentSet} EU_{\IC}^{\boldsymbol{k}}(a)$ is lower bounded by $\sum_{j=1}^{m}s_m = 2^m - 1$. 
Hence, for any ${\boldsymbol{k}}\in \mathcal{K}^*_m$:
\begin{align*}
    \frac{|EU_{\IC}^{\boldsymbol{k}}(a_i) - EU_{\IC}^{\boldsymbol{k}}(a_j)|}{\sum_{a\in \AgentSet} EU_{\IC}^{\boldsymbol{k}}(a)} &\le  \frac{|EU_{\IC}^{\boldsymbol{k}}(a_i) - EU_{\IC}^{\boldsymbol{k}}(a_j)|}{2^m - 1}\\ 
    &\le \frac{(2^m-1) - (1-\epsilon)(2^m-1)}{2^{m}-1}\\
    &\le \epsilon.
\end{align*}

\paragraph{The Borda case.}
We wish to show that for any $\epsilon$, there always exists an optimal egalitarian solution for which the difference in portions of total utility assigned to any two different agents is smaller than $\epsilon$ when $m$ is high enough. However, instead of reasoning on the portion of total expected utility received by an agent, we will work on a proxy, denoted by  $\tilde{P}^{\boldsymbol{k}}(a) = 2EU_{\IC}^{\boldsymbol{k}}(a)/m(m+1)$. 
Note that, compared to $EU_{\IC}^{\boldsymbol{k}}(a)/(\sum_{a\in \AgentSet} EU_{\IC}^{\boldsymbol{k}}(a))$, $\tilde{P}^{\boldsymbol{k}}(a)$ replaces the expected total utility received by the $n$ agents by a lower bound on it given by $\sum_{j=1}^{m}s_m = m(m+1)/2$.

Let $\epsilon>0$ be a positive value. We set $\epsilon' = \epsilon/n$, and $m = 2/\epsilon'$. We now show by induction on $l$ the following: For any value $\tau \in \{i/m,i\in [m]_0\}$, there exists a solution ${\boldsymbol{k}}$ maximizing $\min_{a\in \{a_1,\ldots,a_l\}} \tilde{P}^{\boldsymbol{k}}(a)$ under the constraint that a proportion $\tau$ of the picks are assigned to the $l$ first pickers which ensures that  $|\tilde{P}^{\boldsymbol{k}}(a_i) - \tilde{P}^{\boldsymbol{k}}(a_j)| \le l\epsilon'$ for any $i,j \in [l]^2$. We denote by $\mathcal{P}_l^{\tau}$ the previous optimization problem and $\Gamma_l^{\tau}$ its optimal value.

The claim is trivially true for $l=1$. 
Assume, it is true for $l\ge 1$. 
We seek a solution maximizing $\min_{a\in \{a_1,\ldots,a_{l+1}\}} \tilde{P}^{\boldsymbol{k}}(a)$ given that they receive a proportion $\tau$ of the items. 
The $l$ first agents will receive a proportion $\tau' \in [0,\tau]$ of the items, and we can assume wlog that the allocation to the $l$ first agents is the one maximizing $\mathcal{P}_l^{\tau'}$ insuring our inductive property. 
Note that if $\tau'$ increases (resp. decrease) by $1/m$, this may only increase (resp. decrease) $\Gamma_l^{\tau+1/m}$  by $2/m$, i.e., $\Gamma_l^{\tau'+1/m} \le \Gamma_l^{\tau'} + 2/m$ and decrease (resp. increase) the $\tilde{P}^{\boldsymbol{k}}(a_{l+1})$ value of the $(l+1)^{th}$ picker by $2/m$ and that $\Gamma_l^{\tau}$ (resp. $\tilde{P}^{\boldsymbol{k}}(a_{l+1})$) is non-decreasing (resp. non-increasing) in $\tau$. 
Hence, by adjusting the value of $\tau'$, we can ensure that there exists a  solution ${\boldsymbol{k}}$ optimal for $\mathcal{P}_{l+1}^{\tau}$ and such that $|\tilde{P}^{\boldsymbol{k}}(a_{l+1}) - \Gamma_l^{\tau'}| \le \epsilon'$. 
Therefore, by the inductive property, $|\tilde{P}^{\boldsymbol{k}}(a_i) - \tilde{P}^{\boldsymbol{k}}(a_j)| \le (l+1)\epsilon'$ for any $i,j \in [l+1]^2$. This proves the inductive property. Using $l = n$, and $\tau = 1$, we obtain the claimed result as $$|\tilde{P}^{\boldsymbol{k}}(a_i) - \tilde{P}^{\boldsymbol{k}}(a_j)| \le \frac{|EU_{\IC}^{\boldsymbol{k}}(a_i) - EU_{\IC}^{\boldsymbol{k}}(a_j)|}{\sum_{a\in \AgentSet} EU_{\IC}^{\boldsymbol{k}}(a)}.$$ 

\end{proof}

\paragraph{A note on computation times} All tests were run with Python 3.10.12 on a personal computer with Ubuntu 22.04.4 LTS, 8 Intel(R) Core(TM) CPU i7-1185G7 3.00GHz cores and 
32 GB RAM. 
With $n=5$ and $m=70$, and a sample size of 1000 profiles, the computation of the optimal allocation using Equation~\ref{eq : DP1 utilitarian}, given that 
\bjl prefix independence \ejl is satisfied, takes approximately 50 seconds. By contrast, the use of Algorithm~\ref{alg:esw} (GreedyESW) reduces the computation time to approximately 7 seconds. Furthermore, when applicable, the exact computation using Equation~\ref{eq : rec formula} is highly efficient, requiring only approximately 0.07 seconds.

\section{Finding a suitable scoring vector} 
\label{app : expe}

A question that has been overlooked until now\footnote{Not only in this paper but also in previous papers 
on 
fair division who also use scoring vectors (e.g \cite{BaumeisterBLNNR17}).} is, where does the vector of scores come from? In order to address it we suggest, and test, the following methodology. For the specific domain at hand, prepare a questionnaire where some users, considered representative of the population of users, are presented a set of items: for instance, if the problem is about allocating time slots for using a tennis court, users are presented several time slots.
Once this scoring vector has been elicited,
it is used to determine optimal CSDs, which can be applied many times, with different sets of users. 
A similar method has been used for voting by \citet{BoutilierCHLPS15} (see Section 5.6). 

We designed an online experiment. To each user taking part in it,
we present 12 ice-cream flavours uniformly selected among 62 possible and elicit their utility on a scale [0,100].

We first ask the user to tell which is their preferred flavour ($PF$) among the 12, and we tell them that the value for $PF$ is fixed to 100. Then, for each flavour $F$ (including $PF$), we present the user a slider, with which they indicate the value of $F$ between 0 and 100. 

Once all vectors are collected, we rearrange them non-increasingly.
Then all vectors are averaged among all users. We obtain a scoring vector $s = (s_1, \ldots, s_{12})$: $s_i$ 
is the average value, among all users, of their $i$th most preferred item. 

We had 54 participants. Screenshots of the experiment, information on how consents and the data were collected, as well as the list of 
the 54 gathered vectors are included in the Appendix. Their average is
$$\begin{array}{cccc}
  s_1 = 91.4 & s_2 = 76.6
  & s_3 = 68.2 & s_4 = 56.9\\
  s_5 =48.6 & s_6 = 41 &
  s_7 =34.3 & s_8 = 26.1\\
  s_9 = 21.1 & s_{10} = 16.5 &
  s_{11} = 10.2 & s_{12} = 5.3  
\end{array}$$
Figure \ref{fig:exp} shows how this vector compares to the Borda vector (rescaled such that the score of one's preferred flavor is 100)\footnote{Note that by averaging we lost some interesting information about variance: some users have rapidly decreasing, and some others slowly decreasing valuations. Moreover, note that the highest score of the averaged vector is not 100 as several users did not respect this constraint.}. 

\begin{figure}
    \centering
    \begin{tikzpicture}[scale=0.6]
\begin{axis}[
title={Comparison of Borda and Experiment vector},
xlabel={s},
ylabel={Utility},
]

\addplot table[x=S, y=Exp] {./Data/Exp_Borda.txt};
\addplot table[x=S, y=Borda] {./Data/Exp_Borda.txt};
\legend{Experiment,Borda}

\end{axis}
\end{tikzpicture}
    \caption{\small Comparison of the Borda scoring score (rescaled to [0,100]) with the scoring vector obtained through our experiment by taking the expectation of the participants' answers.}
    \label{fig:exp}
\end{figure}

\subsection*{The ice-cream experiment}
We provide additional information on the experiment used for estimating an adequate scoring vector for the application of allocating ice creams with different flavors. 

The list of all ice-cream flavors used for the experiment is the following one : \\

$[$Kiwi, Litchi, Mango, Mandarin, Melon, Mirabelle, Blackberry, Blueberry, Orange, Blood orange, Apricot, Pineapple, Banana, Lemon, Lime, Cherry, Cassis, Raspberry, Coco, Fig, Strawberry, Passion fruit, Pear, Rhubarb, Grapefruit, Honey-Pine nuts, Tiramisu, Chocolate ginger, Tagada strawberry, Nougat, Speculoos, Coffee, Milk jam, Pistachio, Licorice, Lavender, Caramel, Dragibus, Avocado, Chewing gum, Olive, Chili chocolate, Tomato-Basil, Cinnamon, White chocolate, Chocolate, Almond, Poppy, Cookies, Gingerbread, Cactus, Beer, Oreo, Nutella, Vanilla, Candy floss, Rum-Raisin, Pumpkin, Chestnut, Wild pollen, Rice pudding, Salted butter caramel$]$.\\ 

Each participant was presented a subset of 12 of these flavors, sampled uniformly at random with a seed set according to the $\mathtt{Math.Random()}$ Javascript method. Note that this method sets the seed for simulating randomness in a way that depends on the browser of the user. 

We tested two different ways of explaining the experiment to participants. Indeed, we wanted to evaluate the impact of describing the experiment in one way or another.
\begin{itemize}
\item For half of the participants\footnote{In fact, each participant had a probability 0.5 of getting one set of directives or the other. Luckily this procedure split the set of participants in two sets of equal sizes.}, we presented the scores assigned to the ice-cream flavours as Von Neumann-Morgenstein utility values. We first ask the user to tell which is their preferred flavour ($PF$) among the 12, and we tell them that the value for $PF$ is fixed to 100. Then, for each flavour $F$ (including $PF$), we present the user a slider, with which they will indicate the value of $F$ between 0 and 100. They are told that they can interpret the chosen value $V$ as the exact point where they are indifferent between receiving $PF$ with probability $\frac{V}{100}$ and nothing with probability $1-\frac{V}{100}$, or receiving $F$ for sure. A screenshot of this process is reported in Figure~\ref{fig : screen_loterie}. 
\item For the other half of participants, we tested  simpler directives. As done previously, we explain to users that the value for their preferred flavour should be fixed to 100. Then, for each flavour $F$, we present the user a slider, with which they should indicate the value of $F$ between 0 and 100, without mentioning any probabilistic interpretation for these values. A screenshot of this process is reported in Figure~\ref{fig : screen_intuition}. 
\end{itemize}

The $27$ scoring vectors of the participants who followed the first (resp. second) directives are displayed on Table~\ref{tab:vectors with prob directives} (resp. Table~\ref{tab:vectors with non prob directives}).

\begin{table}[h]
\resizebox{0.8\linewidth}{!}{
\begin{tabular}{cccccccccccc}
100 & 95 & 90 & 30 & 20 & 0 &  0 &  0 &  0 &  0 &  0 &  0\\
100 & 12 & 11 &  0 &  0 &  0 &  0 &  0 &  0 &  0 &  0 &  0\\
100 & 100 &100 &100 & 93 & 90 & 73 & 59 & 52 & 48 & 10 &  6\\
100 & 93 & 93 & 90 & 63 & 50 & 33 & 13 & 10 & 10 & 0 & 0\\
100 & 100 & 100 & 70 & 50 & 50 & 40 & 40 & 30 & 30 & 30 & 25\\
91 & 75 & 59 & 17 & 17 & 15 & 12 &  5 &  0 &  0 &  0 &  0\\
99 & 90 & 86 & 80 & 61 & 60 & 32 & 30 & 24 & 21 & 18 & 17\\
100 & 95 & 90 & 80 & 60 & 50 & 10 & 10 &  5 &  5 &  0 &  0\\
100 & 95 & 90 & 80 & 70 & 60 & 60 & 50 & 45 & 30 & 25 & 10\\
100 & 95 & 85 & 80 & 80 & 70 & 60 & 52 & 40 & 30 & 20 & 10\\
100 &  80 & 75 & 60 & 50 & 15 & 10 &  5 &  5 &  3 &  1 &  0\\
100 & 40 & 30 & 20 & 20 & 10 & 10 &  5 &  5 &  5 &  2 & 0\\
100 & 90 & 79 & 76 & 76 & 66 & 64 & 51 & 32 & 26 & 18 & 10\\
100 & 100 & 90 & 90 & 90 & 89 & 88 & 86 & 84 & 80 & 75 & 59\\
100 & 70 & 60 & 60 & 40 & 30 & 20 & 15 & 10 & 10 & 10 & 10\\
95 & 93 & 92 & 54 & 50 & 42 & 37 & 27 & 25 & 23 & 20 & 0\\
100 & 30 & 30 & 30 & 20 & 20 & 20 & 10 & 10 & 0 &  0 & 0\\
100 & 98 & 94 & 94 & 94 & 92 & 92 & 92 & 90 & 80 & 68 & 10\\
100 & 43 & 36 & 35 & 31 & 27 & 18 & 17 & 14 &  8 &  5 & 0\\
100 & 85 & 80 & 70 & 25 & 15 & 15 & 10 &  5 &  5 &  2 & 0\\
100 & 70 & 70 & 50 & 50 & 30 & 30 &  0 &  0 &  0 &  0 & 0\\
100 & 82 & 81 & 76 & 75 & 74 & 53 & 31 & 21 & 14 &  0  & 0\\
100 & 95 & 79 & 73 & 56 & 50 & 48 & 45 & 33 & 15 & 10 & 9\\
 90 & 88 & 81 & 61 & 53 & 50 & 50 & 50 & 50 & 50 & 20 & 4\\
 100 & 75 & 53 & 47 & 44 & 36 & 30 & 23 & 12 &  5 &  0 & 0\\
100 & 90 & 80 & 51 & 50 & 50 & 50 & 40 & 40 & 20 & 10 &  2\\
100 & 80 & 80 & 70 & 60 & 60 & 60 & 50 & 40 & 30  &30 & 10\\
\end{tabular}}
    \caption{List of the 27 scoring vectors (one per row) obtained for participants following the directives mentioning a probabilistic interpretation for values assigned to ice-cream flavours.}
    \label{tab:vectors with prob directives}
\end{table}

\begin{table}[h]
\resizebox{0.8\linewidth}{!}{
\begin{tabular}{cccccccccccc}
83 & 82 & 80 & 78 & 73 & 70 & 65 & 65 & 57 & 50 & 13 & 12\\
91 & 87 & 76 & 55 & 42 & 41 & 40 & 37 & 36 & 18 & 15 & 10\\
100 & 80 & 70 & 60 & 60 & 50 & 50 & 40 & 40 & 30 & 20 & 10 \\
60 & 45 & 40 & 19 & 0 & 0 & 0 & 0 & 0 & 0 & 0 &  0\\
80 & 80 & 73 & 73 & 65 & 59 & 56 & 34 &  0 & 0 & 0 & 0\\
83 & 81 & 78 & 77 & 76 & 69 & 68 & 25 & 13 & 6 & 4 & 3\\
91 & 91 & 80 & 61 & 54 & 51 & 19 & 11 & 10 & 10 &  0 &  0\\
86 & 82 & 74 & 72 & 48 &  6 &  0 &  0 &  0 &  0 &  0 &  0\\
85 & 70 & 13 & 11 &  8 &  7 &  0 &  0 &  0 & 0 & 0 & 0\\
49 & 49 & 40 & 35 & 30 & 20 & 14 & 10 & 10 &  5 &  0 &  0\\
76 & 70 & 68 & 60 & 57 & 55 & 55 & 55 & 40 & 40 & 25 & 10\\
100 & 65 & 0 &  0 &  0 &  0 &  0 &  0 &  0 &  0 &  0 &  0\\
89 & 86 & 82 & 80 & 73 & 71 & 69 & 61 & 60 & 52 & 43 & 34\\
90 & 80 & 72 & 63 & 55 & 54 & 13 &  9 &  6 &  0 &  0 &  0\\
100 & 99 & 80 & 70 & 65 & 60 & 51 &  4 &  3 &  2 &  1 &  0\\
100 & 92 & 88 & 78 & 73 & 62 & 50 & 41 & 40 & 28 &  7 &  3\\
97 & 92 & 77 & 59 & 49 & 43 & 33 & 30 & 18 & 0 & 0 & 0\\
100 & 78 & 70 & 21 &  0 &  0 &  0 &  0 &  0 &  0 &  0 &  0\\
79 & 59 & 52 & 43 & 32 & 20 & 15 & 14 & 12 & 12 & 11 & 10\\
82 & 78 & 78 & 75 & 71 & 49 & 32 & 16 &  8 &  0 &  0 &  0\\
100 & 80 & 80 & 70 & 70 & 60 & 60 & 60 & 50 & 50 & 20 &  0\\
81 & 40 & 30 & 22 & 13 &  0 &  0&   0&   0&   0 &  0 &  0\\
72 & 64 & 61 & 61 & 41 & 34 & 31 & 28 & 26 & 18 & 12 & 11\\
91 & 86 & 74 & 67 & 64 & 62 & 57 & 15 & 10 &  9 &  0 & 0\\
81 & 75 & 70 & 58 & 55 & 27 & 16 & 0 &  0 &  0 &  0 &  0\\
85 & 60 & 59 & 41 & 33 & 31 & 31 & 29 & 15 &  8 &  4 &  2\\
30 & 27 & 24 & 20 & 18 & 14 & 13 & 8 &  6 &  3 &  0 &  0\\
\end{tabular}}
    \caption{List of the 27 scoring vectors (one per row) obtained for participants following the directives which did not mention a probabilistic interpretation for values assigned to ice-cream flavours.}
    \label{tab:vectors with non prob directives}
\end{table}
\begin{figure}[b]
    \centering
    \begin{tikzpicture}[scale=0.6]
\begin{axis}[
xlabel={s},
ylabel={Utility},
]

\addplot table[x=S, y=Lottery] {./Data/Exps.txt};
\addplot table[x=S, y=Simpler] {./Data/Exps.txt};
\legend{Lottery-based,Simpler}

\end{axis}
\end{tikzpicture}
    \caption{\small Scoring vectors obtained through our experiment. The scoring vector ``Lottery-based'' (resp. ``Simpler'') was obtained by averaging the answers of the participants receiving the directives which mentioned (resp. did not mention) a probabilistic interpretation for values assigned to ice-cream flavours.}
    \label{fig:exp2modes}
\end{figure}
A comparison of the resulting averaged scoring vectors with the Borda scoring vector is provided in Figure~\ref{fig:exp2modes}. We observe that the two averaged scoring vectors obtained for each set of directives have a similar shape with one being slightly above the other one. Indeed, more participants following the simpler directives did not follow the constraint that their preferred flavour among the 12 should receive a value of 100, leading to smaller values in the averaged scoring vector. This is probably due to an ambiguity about the fact that the preferred flavour of the participant should be understood as the most preferred one among the ones which are presented. While this finding points out a possible improvement for our experiment, we insist on the fact that it should be understood here as a proof of concept illustrating the feasibility of such an approach to estimate a relevant scoring vector for the domain at hand. As the results were similar for the two ways of describing the experiment, we decided to merge the two list of vectors for plotting the Figure~\ref{fig:exp} presented in the main document of the submission.

\begin{figure*}[h!]
    \centering
    \resizebox{\linewidth}{!}{\includegraphics{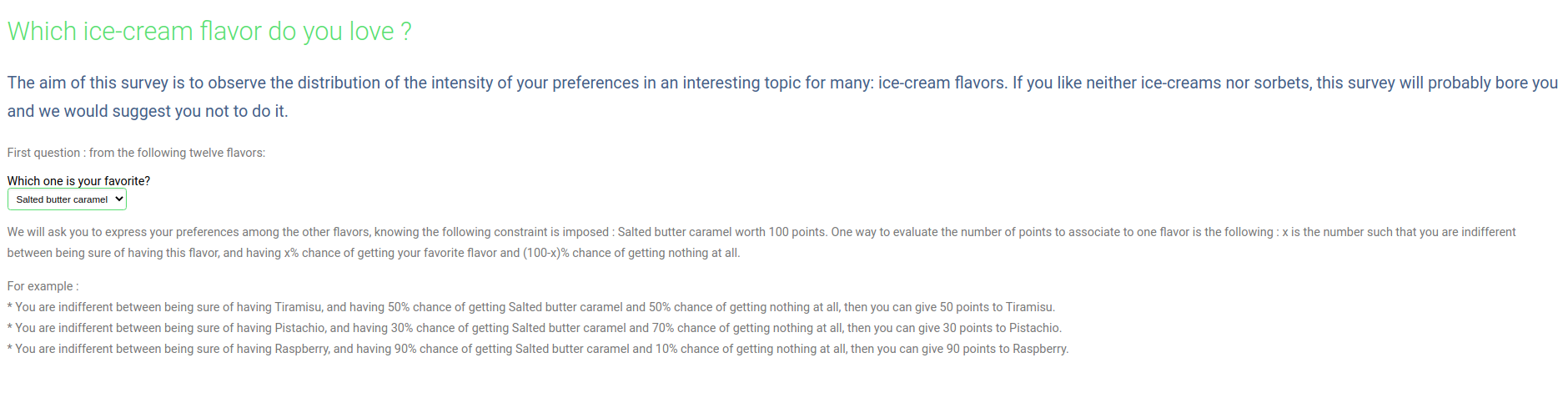}}
    \caption{Screenshot of our questionnaire. Directives to the user mentioning a probabilistic interpretation for the scores assigned to ice-cream flavors.}
    \label{fig : screen_loterie}
\end{figure*}

\begin{figure*}[h]
    \centering
    \resizebox{\linewidth}{!}{\includegraphics{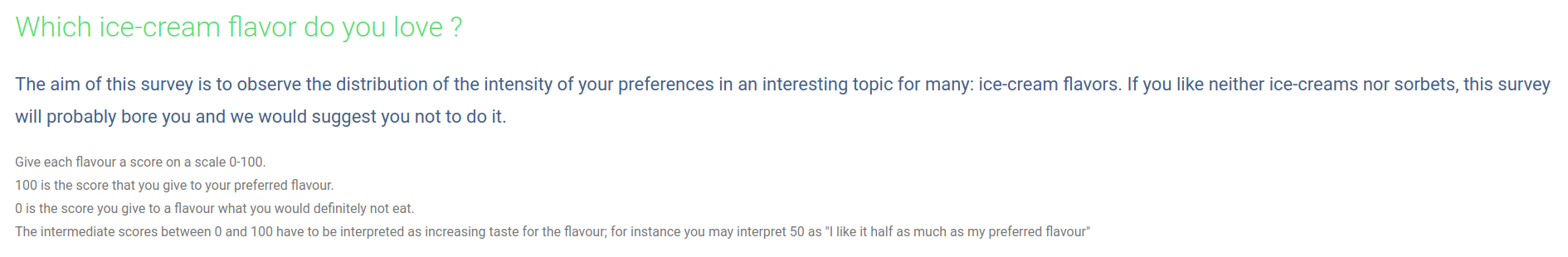}}
    \caption{Screenshot of our questionnaire. Directives to the user not mentioning a probabilistic interpretation for the scores assigned to ice-cream flavors.}
    
    \label{fig : screen_intuition}
\end{figure*}

\begin{figure*}[h!]
    \centering
    \resizebox{\linewidth}{!}{\includegraphics{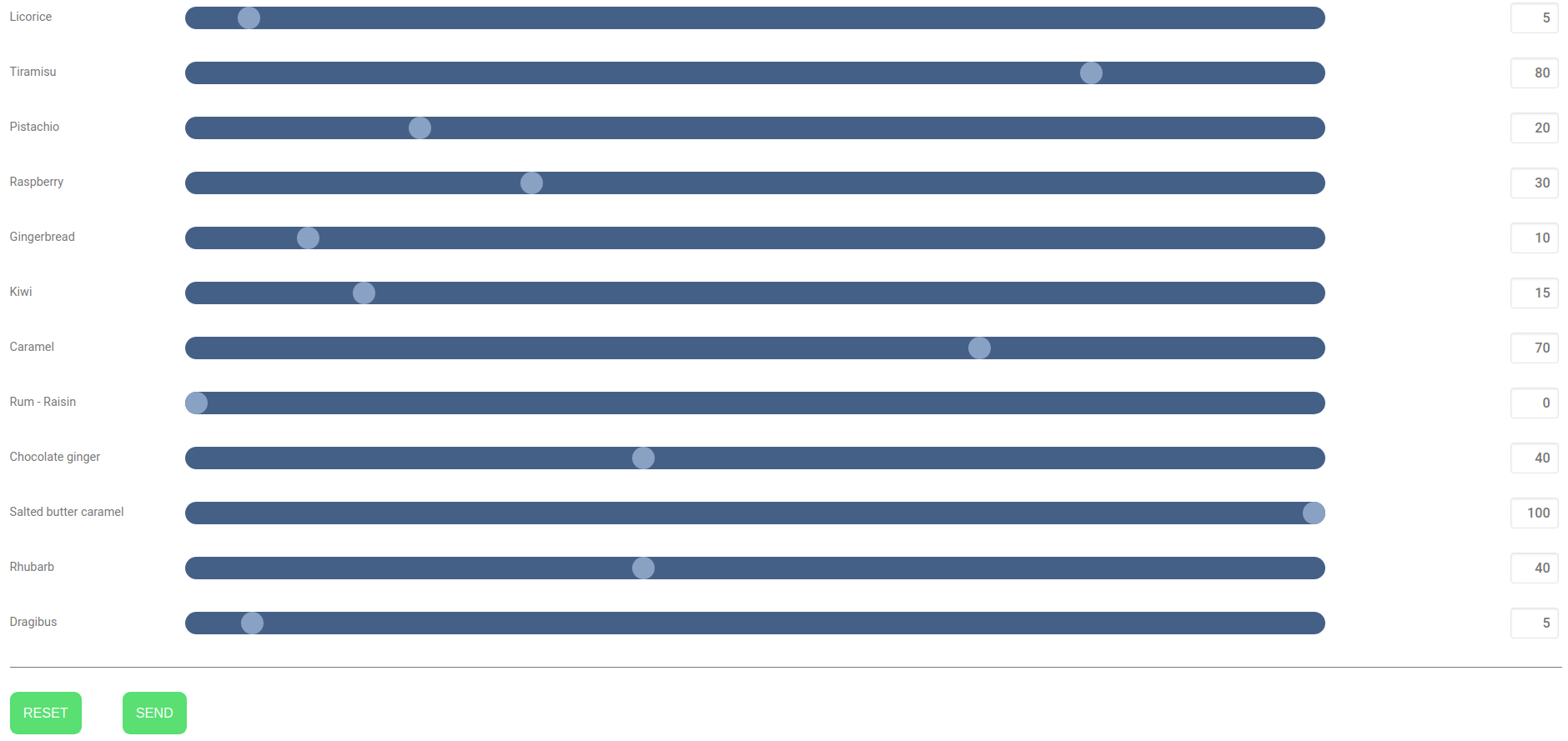}}
    \caption{Screenshot of our questionnaire. The sliders make it possible for the participants to assign a value to each flavor.}
    \label{fig : screen_slide}
\end{figure*}

\paragraph{Collect of Consents and Data} 
The data which was collected was completely anonymous; indeed, we did not collect any piece of information about the participants besides the scores assigned to the ice-cream flavors. Moreover, we checked with the ethic committee of one of the authors' university that the experiment complies with the data protection regulation. Last, as illustrated in Figure~\ref{fig : consent}, all the participants to the experiment had to check a box, confirming that they agreed that their answers would be stored and used for research purposes.

\begin{figure*}[h!]
    \centering
    \resizebox{\linewidth}{!}{\includegraphics{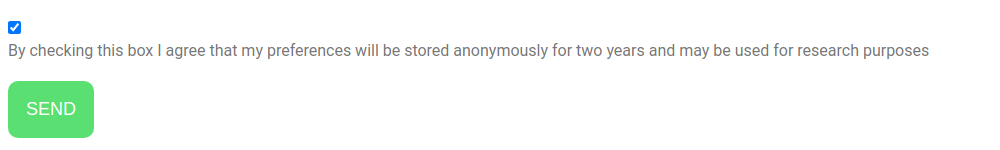}}
    \caption{Screenshot of our questionnaire. Participants had to check a box, agreeing that their answers could be used for research purposes.}
    
    \label{fig : consent}
\end{figure*}


\section{The price of the assignment of agents to positions}\label{app : AAP}

By abuse of notation, we may also use notation $SW_{\boldsymbol{P}}^x(\boldsymbol{k})$ (for $x\in\{U,E,N\}$) when $\ProbModel$ is the degenerate probability distribution for which profile $\boldsymbol{P}$ occurs with probability 1. 

So far, we have considered probability distributions over profiles that treat all agents in an interchangeable way. Hence, deciding who should be agent $a_1$ and pick first, who should be agent $a_2$ and pick second {\em et caetera}, has no impact on ex ante social welfare. What about its
impact on ex post social welfare? 
We now study this impact, in terms of loss of social welfare between the best possible assignment and the worst possible assignments of agents to positions.

Let $S_n$ denotes the set of permutations of $[n]$. Given $\pi\in S_n$ and $\boldsymbol{P}$ a preference profile, $\boldsymbol{P}_{\pi}$ denotes the preference profile obtained from $\boldsymbol{P}$ by permuting agent’s rankings according to $\pi$.

\begin{definition}
Given a CSD with vector $\boldsymbol{k}$ and a preference profile $\boldsymbol{P}$, the utilitarian, egalitarian and Nash price of assignment of agents to positions, 
denoted by $\mathcal{P}_{AtoP}^{u}$, $\mathcal{P}_{AtoP}^{E}$, and $\mathcal{P}_{AtoP}^{N}$ respectively, are defined by: 
\begin{align*}
&\mathcal{P}_{AtoP}^U = \frac{\max_{\pi\in S_n} SW_{\boldsymbol{P}_\pi}^U(\boldsymbol{k})}{ \min_{\pi\in S_n} SW_{\boldsymbol{P}_\pi}^U(\boldsymbol{k})}, \\ &\mathcal{P}_{AtoP}^E = \frac{\max_{\pi\in S_n} SW_{\boldsymbol{P}_\pi}^E(\boldsymbol{k})}{ \min_{\pi\in S_n} SW_{\boldsymbol{P}_\pi}^E(\boldsymbol{k})},\\
&\mathcal{P}_{AtoP}^N = \frac{\max_{\pi\in S_n} SW_{\boldsymbol{P}_\pi}^N(\boldsymbol{k})}{ \min_{\pi\in S_n} SW_{\boldsymbol{P}_\pi}^N(\boldsymbol{k})}.
\end{align*}
\end{definition}

We make the two following easy observations which hold whatever the notion of social welfare which is used: 
\begin{enumerate}
    \item The worst social welfare that can be obtained when allocating resources using a CSD is obtained when for all $j\in [m]$, the $j^{th}$ good that is picked by an agent is her $j^{th}$ preferred good. In particular, this results in a utilitarian social welfare of $\sum_{j=1}^{m} s_j$.
    \item The best social welfare that can be obtained when allocating resources using a picking sequence (not necessarily non-interleaving) where $a_i$ picks $k_i$ goods for all $i\in [n]$ is obtained when each agent picks her $k_i$ preferred goods. This results in a social welfare of $\star_{i=1}^n\sum_{j=1}^{k_i} s_j$, where $\star = \sum, \prod$ or $\min$ depending on the chosen notion of social welfare.
\end{enumerate}
 
Hence, upper bounds on $\mathcal{P}_{AtoP}^U$, $\mathcal{P}_{AtoP}^E$, and $\mathcal{P}_{AtoP}^N$ are given by:
$$\frac{\sum_{i=1}^n\sum_{j=1}^{k_i} s_j}{\sum_{j=1}^m s_j}, 
\frac{\sum_{j=1}^{k_{\min}} s_j}{\min_{i \in [n]} \sum_{j=c_i+1}^{c_i+k_i} s_j}, 
\text{ and }\frac{\prod_{i=1}^n(\sum_{j=1}^{k_i} s_j)}{\prod_{i=1}^n(\sum_{j=c_i+1}^{c_i+k_i} s_j)}.$$
where $c_i = \sum_{l < i} k_l$ and $k_{min} = \min\{k_i | i\in [n]\}$. 

We show that there exists a preference profile and a CSD, such that this bound is closely matched.

\begin{proposition}
Assume $m = d\times n$ with $d\in \mathbb{N^*}$. There exists a preference profile and a CSD such that: 
\begin{multline*} \\
\frac{\displaystyle(n-1)\sum_{j=1}^{d} s_j + \sum_{j=d+1}^{2d} s_j}{\displaystyle\sum_{j=1}^m s_j} \le \mathcal{P}_{AtoP}^U \le \frac{\displaystyle n\sum_{j=1}^{d} s_j}{\displaystyle\sum_{j=1}^m s_j},\\
\frac{\displaystyle\sum_{j=d+1}^{2d} s_j}{\displaystyle\sum_{j=(n-1)d+1}^m s_j} \le \mathcal{P}_{AtoP}^E \le \frac{\displaystyle\sum_{j=1}^{d} s_j}{\displaystyle\sum_{j=(n-1)d+1}^m s_j},\\
\frac{\displaystyle\left({\sum_{j=1}^{d} s_j}\right)^{(n-1)} \times \sum_{j=d+1}^{2d} s_j}{\displaystyle\prod_{i=1}^n\sum_{j=(i-1)d+1}^{id} s_j} \le \mathcal{P}_{AtoP}^N \le \frac{\displaystyle\left({\sum_{j=1}^{d} s_j}\right)^n}{\displaystyle\prod_{i=1}^n\sum_{j=(i-1)d+1}^{id} s_j}.\\
\end{multline*}
\end{proposition}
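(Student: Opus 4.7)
The plan is to handle the upper and lower bounds separately, but always using the balanced picking sequence $k = (d, d, \ldots, d)$, for which $c_i = (i-1)d$ and $k_{\min} = d$. For the upper bounds, I would substitute these values into the two elementary observations stated just above the proposition: the worst-case social welfare (over all profiles and permutations) yields $\sum_{j=1}^m s_j$ for USW and, since $s$ is non-increasing, the minimum $d$-block among the sums $\sum_{j=(i-1)d+1}^{id} s_j$ is attained at $i = n$; the best-case yields $\star_{i=1}^n \sum_{j=1}^{d} s_j$ for $\star \in \{\sum, \min, \prod\}$. These substitutions directly reproduce the three displayed upper bounds.

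For the lower bounds, I would exhibit a single profile $P$ and two permutations that together certify all three inequalities. Partition the $m = nd$ items into blocks $B_1, \ldots, B_n$ of size $d$ with a common intra-block ordering, and define block-level preferences as follows: $a_1$ ranks the blocks $B_1, B_2, \ldots, B_n$; $a_2$ ranks them $B_1, B_n, \ldots$ (the remainder arbitrary); and for every $i \geq 3$, $a_i$ ranks $B_{i-1}$ first, then some order of $B_1, \ldots, B_{i-2}$, and finally $B_i, B_{i+1}, \ldots, B_n$. This builds in the \emph{cascading} property that $a_i$'s top $i-1$ blocks are exactly $\{B_1, \ldots, B_{i-1}\}$ (trivially for $i = 1, 2$, by design for $i \geq 3$).

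The two permutations to compare are the ``bad'' $\pi' = (a_2, a_1, a_3, a_4, \ldots, a_n)$ and the ``good'' $\pi^* = (a_1, a_3, a_4, \ldots, a_n, a_2)$. A short induction on the picking step shows that under $\pi'$, the agent at position $i$ finds $B_1, \ldots, B_{i-1}$ already taken and thus greedily grabs her $i$-th preferred block $B_i$, contributing utility $\sum_{j=(i-1)d+1}^{id} s_j$; summing, minimizing and multiplying these contributions gives exactly $\sum_{j=1}^m s_j$, $\sum_{j=(n-1)d+1}^m s_j$ and $\prod_{i=1}^n \sum_{j=(i-1)d+1}^{id} s_j$ respectively. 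Under $\pi^*$, the first $n-1$ pickers ($a_1$, then $a_3, a_4, \ldots, a_n$) each grab their top block in turn (namely $B_1$, then $B_2, B_3, \ldots, B_{n-1}$), worth $\sum_{j=1}^d s_j$ apiece, while $a_2$ is left with only $B_n$, her second-ranked block, worth $\sum_{j=d+1}^{2d} s_j$. Taking the three ratios $SW_{P_{\pi^*}}/SW_{P_{\pi'}}$ for USW, ESW and NSW then reproduces the three claimed lower bounds (the ESW identity uses once more that $B_n$ is the minimum block under $\pi'$).

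The only delicate point is the coordinated design of the block-rankings of $a_3, \ldots, a_n$: each must simultaneously enable the cascade under $\pi'$ \emph{and} keep $B_{i-1}$ on top so as to remain productive under $\pi^*$. Putting $B_{i-1}$ first, followed by some ordering of $\{B_1, \ldots, B_{i-2}\}$, fulfils both conditions at once, after which the verification reduces to routine bookkeeping of greedy picks.
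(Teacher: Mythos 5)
Your proposal is correct and follows essentially the same route as the paper: the balanced sequence $k=(d,\ldots,d)$, a profile in which two agents share the same top block of $d$ items while one block is nobody's favourite, a ``bad'' ordering that triggers a cascade giving the $i$-th picker her $i$-th block and a ``good'' ordering in which all but one agent get their top block. Up to relabelling the agents and blocks, your construction and pair of permutations coincide with the paper's.
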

\begin{proof}
Consider the CSD with vector $\boldsymbol{k}$ such that $k_1 = k_2 = \ldots = k_n = d$. 
Given the previously defined vector $\boldsymbol{k}$, we build a preference profile $\boldsymbol{P}$ as follows. 
Let $S_i$ be a set of $d$ goods for $i \in \{2, \ldots, n\}$ such that $S_i\cap S_j = \emptyset$ for all $i \neq j \in \{2, \ldots, n\}$, $S_1 = S_2$, and $S_{n+1} = \ItemSet\setminus \bigcup_{i=2}^n S_i$. 
We let $S_i$ be the preferred goods of agent $a_i$ for $i\in[n]$. Additionally, each agent $a_j$ with $j\in [n]$ prefers any good in $S_s$ to any good in $S_t$ if $s < t$ and $s,t\in [n]\setminus\{j\}$. 
Lastly, we assume each agent $a_j$ with $j\in \{2,\ldots,n\}$ ranks goods in  $S_{n+1}$ last and that $a_1$ ranks these goods just after the ones in $S_1$.

One can easily check that for all $i\in [n]$, $U_{\boldsymbol{P}}^{\boldsymbol{k}}(a_i) = \sum_{j=(i-1)d+1}^{id} s_j$. If we otherwise consider the permutation $\pi = (2,3,\ldots,n,1)$, we obtain that the first $n-1$  agents get utility $\sum_{j=1}^{d} s_j$ while the last agent of the sequence (agent $a_1$) gets utility $\sum_{j=d+1}^{2d} s_j$. The three lower bounds follow.  
\end{proof}

To give an example, let us take the Borda scoring vector. The price of assignment of agents to positions is reasonable for utilitarianism, as it tends to 2 when $m$ grows; it is much larger for egalitarianism (it is in the order of $m$ when $m$ grows), with Nash being even worse (especially if $n$ grows and $d$ is kept constant, $\mathcal{P}_{AtoP}^N$ explodes). As a consequence, optimizing the CSD gives good {\em ex ante fairness} guarantees, but much less {\em ex post fairness} guarantees (it is consistent with the well-known general observation, in fair division, that ex post fairness guarantees are harder to obtain than ex post fairness guarantees).

\section{Examples}\label{app : examples}

All examples were computed using a sample of $k=1000$ preference profiles drawn from the distribution of the table. Note that with $\FC$ and utilitarianism, every allocation is optimal.

\begin{table}[H]
\caption{Results for $\FC$}
\centering
\begin{tabular}{|c|c|c|c|}
\hline
 (n,m) & SW & Best Policy & Best Utilities \\ \hline
 & ESW & (1, 3) & [4.0, 6.0] \\ \cline{2-4}
(2,4) & NSW & (1, 3) & [4.0, 6.0] \\ \cline{2-4}
 & USW & (4, 0) & [10.0, 0.0] \\ \hline
 & ESW & (2, 5) & [13.0, 15.0] \\ \cline{2-4}
(2,7) & NSW & (2, 5) & [13.0, 15.0] \\ \cline{2-4}
 & USW & (7, 0) & [28.0, 0.0] \\ \hline
 & ESW & (3, 7) & [27.0, 28.0] \\ \cline{2-4}
(2,10) & NSW & (3, 7) & [27.0, 28.0] \\ \cline{2-4}
 & USW & (10, 0) & [55.0, 0.0] \\ \hline
\hline
 & ESW & (1, 1, 2) & [4.0, 3.0, 3.0] \\ \cline{2-4}
(3,4) & NSW & (1, 1, 2) & [4.0, 3.0, 3.0] \\ \cline{2-4}
 & USW & (4, 0, 0) & [10.0, 0.0, 0.0] \\ \hline
 & ESW & (1, 2, 4) & [7.0, 11.0, 10.0] \\ \cline{2-4}
(3,7) & NSW & (1, 2, 4) & [7.0, 11.0, 10.0] \\ \cline{2-4}
 & USW & (7, 0, 0) & [28.0, 0.0, 0.0] \\ \hline
 & ESW & (2, 3, 5) & [19.0, 21.0, 15.0] \\ \cline{2-4}
(3,10) & NSW & (2, 3, 5) & [19.0, 21.0, 15.0] \\ \cline{2-4}
 & USW & (10, 0, 0) & [55.0, 0.0, 0.0] \\ \hline
\hline
 & ESW & (1, 1, 1, 1) & [4.0, 3.0, 2.0, 1.0] \\ \cline{2-4}
(4,4) & NSW & (1, 1, 1, 1) & [4.0, 3.0, 2.0, 1.0] \\ \cline{2-4}
 & USW & (4, 0, 0, 0) & [10.0, 0.0, 0.0, 0.0] \\ \hline
 & ESW & (1, 1, 2, 3) & [7.0, 6.0, 9.0, 6.0] \\ \cline{2-4}
(4,7) & NSW & (1, 1, 2, 3) & [7.0, 6.0, 9.0, 6.0] \\ \cline{2-4}
 & USW & (7, 0, 0, 0) & [28.0, 0.0, 0.0, 0.0] \\ \hline
 & ESW & (2, 2, 2, 4) & [19.0, 15.0, 11.0, 10.0] \\ \cline{2-4}
(4,10) & NSW & (1, 2, 2, 5) & [10.0, 17.0, 13.0, 15.0] \\ \cline{2-4}
 & USW & (10, 0, 0, 0) & [55.0, 0.0, 0.0, 0.0] \\ \hline
\hline
\end{tabular}
\end{table}

\newpage

\begin{table}[H]
\caption{Results for $\IC$}
\centering
\begin{tabular}{|c|c|c|c|}
\hline
 (n,m) & SW & Best Policy & Best Utilities \\ \hline
 & ESW & (2, 2) & [7.0, 4.97] \\ \cline{2-4}
(2,4) & NSW & (2, 2) & [7.0, 4.97] \\ \cline{2-4}
 & USW & (2, 2) & [7.0, 4.97] \\ \hline
 & ESW & (3, 4) & [18.0, 16.02] \\ \cline{2-4}
(2,7) & NSW & (3, 4) & [18.0, 16.02] \\ \cline{2-4}
 & USW & (4, 3) & [22.0, 12.04] \\ \hline
 & ESW & (4, 6) & [34.0, 33.05] \\ \cline{2-4}
(2,10) & NSW & (4, 6) & [34.0, 33.05] \\ \cline{2-4}
 & USW & (5, 5) & [40.0, 27.59] \\ \hline
\hline
 & ESW & (1, 1, 2) & [4.0, 3.75, 4.97] \\ \cline{2-4}
(3,4) & NSW & (1, 1, 2) & [4.0, 3.75, 4.97] \\ \cline{2-4}
 & USW & (2, 1, 1) & [7.0, 3.34, 2.45] \\ \hline
 & ESW & (2, 2, 3) & [13.0, 12.0, 11.85] \\ \cline{2-4}
(3,7) & NSW & (2, 2, 3) & [13.0, 12.0, 11.85] \\ \cline{2-4}
 & USW & (3, 2, 2) & [18.0, 11.19, 7.96] \\ \hline
 & ESW & (3, 3, 4) & [27.0, 24.86, 22.11] \\ \cline{2-4}
(3,10) & NSW & (3, 3, 4) & [27.0, 24.86, 22.11] \\ \cline{2-4}
 & USW & (4, 3, 3) & [34.0, 23.76, 16.64] \\ \hline
\hline
 & ESW & (1, 1, 1, 1) & [4.0, 3.74, 3.35, 2.46] \\ \cline{2-4}
(4,4) & NSW & (1, 1, 1, 1) & [4.0, 3.74, 3.35, 2.46] \\ \cline{2-4}
 & USW & (1, 1, 1, 1) & [4.0, 3.74, 3.35, 2.46] \\ \hline
 & ESW & (1, 2, 2, 2) & [7.0, 12.58, 11.16, 7.78] \\ \cline{2-4}
(4,7) & NSW & (1, 2, 2, 2) & [7.0, 12.58, 11.16, 7.78] \\ \cline{2-4}
 & USW & (2, 2, 2, 1) & [13.0, 11.99, 9.97, 3.9] \\ \hline
 & ESW & (2, 2, 2, 4) & [19.0, 18.34, 17.35, 22.15] \\ \cline{2-4}
(4,10) & NSW & (2, 2, 3, 3) & [19.0, 18.34, 23.59, 16.58] \\ \cline{2-4}
 & USW & (3, 3, 2, 2) & [27.0, 24.79, 15.4, 10.92] \\ \hline
\hline
\end{tabular}
\end{table}

\newpage

\begin{table}[H]
\caption{Results for $\PL_{\boldsymbol{\nu}}$ with $\boldsymbol{\nu} = (1.1^m,1.1^{m-1},\ldots,1.1^1)$}
\centering
\begin{tabular}{|c|c|c|c|}
\hline
 (n,m) & SW & Best Policy & Best Utilities \\ \hline
 & ESW & (2, 2) & [7.0, 4.96] \\ \cline{2-4}
(2,4) & NSW & (2, 2) & [7.0, 4.96] \\ \cline{2-4}
 & USW & (2, 2) & [7.0, 4.96] \\ \hline
 & ESW & (3, 4) & [18.0, 15.9] \\ \cline{2-4}
(2,7) & NSW & (3, 4) & [18.0, 15.9] \\ \cline{2-4}
 & USW & (4, 3) & [22.0, 11.92] \\ \hline
 & ESW & (4, 6) & [34.0, 32.37] \\ \cline{2-4}
(2,10) & NSW & (4, 6) & [34.0, 32.37] \\ \cline{2-4}
 & USW & (5, 5) & [40.0, 26.74] \\ \hline
\hline
 & ESW & (1, 1, 2) & [4.0, 3.74, 5.06] \\ \cline{2-4}
(3,4) & NSW & (1, 1, 2) & [4.0, 3.74, 5.06] \\ \cline{2-4}
 & USW & (2, 1, 1) & [7.0, 3.33, 2.48] \\ \hline
 & ESW & (2, 2, 3) & [13.0, 11.94, 11.73] \\ \cline{2-4}
(3,7) & NSW & (2, 2, 3) & [13.0, 11.94, 11.73] \\ \cline{2-4}
 & USW & (3, 2, 2) & [18.0, 11.08, 7.9] \\ \hline
 & ESW & (3, 3, 4) & [27.0, 24.66, 21.45] \\ \cline{2-4}
(3,10) & NSW & (3, 3, 4) & [27.0, 24.66, 21.45] \\ \cline{2-4}
 & USW & (4, 3, 3) & [34.0, 23.31, 16.0] \\ \hline
\hline
 & ESW & (1, 1, 1, 1) & [4.0, 3.76, 3.33, 2.47] \\ \cline{2-4}
(4,4) & NSW & (1, 1, 1, 1) & [4.0, 3.76, 3.33, 2.47] \\ \cline{2-4}
 & USW & (1, 1, 1, 1) & [4.0, 3.76, 3.33, 2.47] \\ \hline
 & ESW & (1, 2, 2, 2) & [7.0, 12.55, 11.17, 8.03] \\ \cline{2-4}
(4,7) & NSW & (1, 2, 2, 2) & [7.0, 12.55, 11.17, 8.03] \\ \cline{2-4}
 & USW & (2, 2, 1, 2) & [13.0, 11.98, 5.98, 8.09] \\ \hline
 & ESW & (2, 2, 2, 4) & [19.0, 18.26, 17.09, 21.33] \\ \cline{2-4}
(4,10) & NSW & (2, 2, 2, 4) & [19.0, 18.26, 17.09, 21.33] \\ \cline{2-4}
 & USW & (3, 3, 2, 2) & [27.0, 24.5, 14.89, 10.35] \\ \hline
\hline
\end{tabular}
\end{table}

\newpage

\begin{table}[H]
\caption{Results for $\Ml_{\phi,\mu}$ with $\phi=0.8$}
\centering
\begin{tabular}{|c|c|c|c|}
\hline
 (n,m) & SW & Best Policy & Best Utilities \\ \hline
 & ESW & (2, 2) & [7.0, 5.03] \\ \cline{2-4}
(2,4) & NSW & (2, 2) & [7.0, 5.03] \\ \cline{2-4}
 & USW & (2, 2) & [7.0, 5.03] \\ \hline
 & ESW & (3, 4) & [18.0, 15.45] \\ \cline{2-4}
(2,7) & NSW & (3, 4) & [18.0, 15.45] \\ \cline{2-4}
 & USW & (4, 3) & [22.0, 11.51] \\ \hline
 & ESW & (4, 6) & [34.0, 31.33] \\ \cline{2-4}
(2,10) & NSW & (4, 6) & [34.0, 31.33] \\ \cline{2-4}
 & USW & (5, 5) & [40.0, 25.79] \\ \hline
\hline
 & ESW & (1, 1, 2) & [4.0, 3.75, 4.98] \\ \cline{2-4}
(3,4) & NSW & (1, 1, 2) & [4.0, 3.75, 4.98] \\ \cline{2-4}
 & USW & (2, 1, 1) & [7.0, 3.32, 2.43] \\ \hline
 & ESW & (2, 2, 3) & [13.0, 11.86, 11.29] \\ \cline{2-4}
(3,7) & NSW & (2, 2, 3) & [13.0, 11.86, 11.29] \\ \cline{2-4}
 & USW & (3, 2, 2) & [18.0, 10.95, 7.36] \\ \hline
 & ESW & (3, 3, 4) & [27.0, 24.06, 20.22] \\ \cline{2-4}
(3,10) & NSW & (3, 3, 4) & [27.0, 24.06, 20.22] \\ \cline{2-4}
 & USW & (4, 3, 3) & [34.0, 22.61, 14.93] \\ \hline
\hline
 & ESW & (1, 1, 1, 1) & [4.0, 3.73, 3.32, 2.42] \\ \cline{2-4}
(4,4) & NSW & (1, 1, 1, 1) & [4.0, 3.73, 3.32, 2.42] \\ \cline{2-4}
 & USW & (1, 1, 1, 1) & [4.0, 3.73, 3.32, 2.42] \\ \hline
 & ESW & (1, 2, 2, 2) & [7.0, 12.51, 10.84, 7.68] \\ \cline{2-4}
(4,7) & NSW & (1, 2, 2, 2) & [7.0, 12.51, 10.84, 7.68] \\ \cline{2-4}
 & USW & (2, 2, 1, 2) & [13.0, 11.84, 5.77, 7.58] \\ \hline
 & ESW & (2, 2, 2, 4) & [19.0, 18.11, 16.69, 19.98] \\ \cline{2-4}
(4,10) & NSW & (2, 2, 2, 4) & [19.0, 18.11, 16.69, 19.98] \\ \cline{2-4}
 & USW & (3, 3, 2, 2) & [27.0, 24.02, 14.41, 9.49] \\ \hline
\hline
\end{tabular}
\end{table}

\end{document}